

\documentclass{article}

\usepackage{amssymb,amsmath,amsthm}
\usepackage[all]{xy} \SelectTips{eu}{12}

\newtheorem{theorem}{Theorem}[section]
\newtheorem{proposition}[theorem]{Proposition}
\newtheorem{lemma}[theorem]{Lemma}

\theoremstyle{definition}
\newtheorem{definition}[theorem]{Definition}
\newtheorem{remark}[theorem]{Remark}
\newtheorem{example}[theorem]{Example}


\newcommand{\arsto}{\rightsquigarrow} 
\newcommand{\mo}{\xymatrix@C=1pc{\ar[r]&}} 
\newcommand{\monomo}{\xymatrix@C=1pc{\ar@{>->}[r]&}} 
\newcommand{\parmo}{\xymatrix@C=1pc{\ar@{-^{>}}[r]&}} 
\newcommand{\monoparmo}{\xymatrix@C=1pc{\ar@{>-^{>}}[r]&}} 
\newcommand{\hetmo}{\xymatrix@C=1pc{\ar@{^{<}-^{>}}[r]&}} 

\newcommand{\spa}[5]{#2:#1\leftarrow#3\rightarrow#5:#4} 
\newcommand{\catC}{\mathcal{C}}
\newcommand{\catgraph}{\mathbf{Graph}} 
\newcommand{\catgraphp}{\mathbf{Graph^p}} 
\newcommand{\catgraphm}{\mathbf{Graph_m}} 
\newcommand{\catgraphi}{\mathbf{Graph_\subseteq}} 
\newcommand{\catgraphpm}{\mathbf{Graph^p_m}} 
\newcommand{\cattermgraph}{\mathbf{TermGraph}} 
\newcommand{\cattermgraphm}{\mathbf{TermGraph_m}} 
\newcommand{\catD}{\mathcal{D}}
\newcommand{\catM}{\mathcal{M}}
\newcommand{\catL}{\mathcal{M}_L}
\newcommand{\catR}{\mathcal{M}_R}
\newcommand{\catP}{\mathcal{P}}

\newcommand{\funL}{\mathcal{L}} 
\newcommand{\funR}{\mathcal{R}} 
\newcommand{\funS}{\mathbf{S}} 
\newcommand{\funSrc}{\mathit{Src}}
\newcommand{\funTgt}{\mathit{Tgt}}
\newcommand{\po}{\mathit{PO}} 
\newcommand{\spo}{\mathit{SPO}} 
\newcommand{\hpo}{\mathit{HPO}}
\newcommand{\dpo}{\mathit{DPO}}   
\newcommand{\poc}{\mathit{POC}}  
\newcommand{\sqpo}{\mathit{SqPO}}   
\newcommand{\fpbc}{\mathit{FPBC}}  
\newcommand{\rgr}{\mathit{RGC}}  
\newcommand{\lgr}{\mathit{LGC}}  
 
\newcommand{\id}{\mathrm{id}} 
\newcommand{\gr}[2]{\Lambda_{#1}(#2)} 
\newcommand{\coslice}{\!\downarrow\!} 
\newcommand{\slice}{\!\uparrow\!} 
\newcommand{\rs}{\mathrm{RS}} 
\newcommand{\dom}{\mathrm{Dom}} 

\newcommand{\safe}[1]{}

\begin{document}

\title{Categorical Abstract Rewriting Systems and \\ 
       Functoriality of Graph Transformation} 
\author{Dominique Duval \\  
 LJK -- Universit\'e de Grenoble\\
B. P. 53,  F-38041 Grenoble, France \\ 
{\sf Dominique.Duval@imag.fr} \\   \and Rachid Echahed \\
LIG -- Universit\'e de Grenoble\\
B. P. 53,  F-38041 Grenoble, France\\
{\sf Rachid.Echahed@imag.fr} \and 
        Fr\'ed\'eric Prost \\
LIG -- Universit\'e de Grenoble\\
B. P. 53,  F-38041 Grenoble, France\\
{\sf Frederic.Prost@imag.fr} \\}

\safe{
\autlabel{1} \email{Dominique.Duval@imag.fr} \\ 
 Laboratoire LJK -- Universit\'e de Grenoble\\
B. P. 53,  F-38041 Grenoble, France 
 
\autlabel{2} \email{Rachid.Echahed@imag.fr} \hspace{0.2cm} \autlabel{3}  \email{Frederic.Prost@imag.fr} \\
Laboratoire LIG -- Universit\'e de Grenoble\\
B. P. 53,  F-38041 Grenoble, France
}


\date{}
\maketitle

\begin{abstract} Abstract rewriting systems are often defined as binary
  relations over a given set of objects. 
  In this paper, we
  introduce a new notion of abstract rewriting system in the framework of
  categories. Then, we define the \emph{functoriality} property of
  rewriting systems. This property is sometimes called \emph{vertical
    composition}. We show that most graph transformation systems
  are functorial and provide a counter-example of graph transformation
  system which is not functorial.  
\end{abstract}

\section{Introduction}

Various properties of rewriting systems can be defined on an abstract
level by using the notion of abstract rewriting systems (see e.g.,
\cite{BaaderN98}).
In this paper we focus on categorical rewriting systems, that is to
say rewriting systems defined by means of category theory, and we
define them in an abstract manner. We consider rule-based frameworks
in which the rewrite step is defined relatively to a match.  The aim
is to be able to reason abstractly about rewriting systems which are
defined categorically.  There are many such systems which underly
graph transformation, following the seminal work of \cite{EhrigPS73}.
In general, a graph rewriting system consists of a set of graph
rewrite rules with a left-hand side $L$ and a right-hand side $R$
(where both are graphs).  When a graph rewrite rule is applied to an
instance of the graph $L$ in a graph $L_1$, it replaces this instance
of $L$ by an instance of $R$, resulting in a new graph $R_1$.  We
introduce categorical rewriting systems in section~\ref{sec:crs}, they
provide an abstract framework for dealing with such notions of rewrite
rules, instances and rewrite steps.
Moreover, in a graph rewriting system, usually the given graph $L_1$
and the modified graph $R_1$ can be seen as the left-hand side and
right-hand side of a new rule, from which the process can be repeated.
Then the functoriality problem appears: from an instance of $L$ in
$L_1$ and an instance of $L_1$ in $L_2$, do we get the same graph
$R_2$ when proceeding in two steps as when proceeding in one step?
The functoriality property is sometimes called the vertical
composition.  It is similar to the property of contextual closure of
term rewriting systems.  A recent work of M. L\"owe \cite{Lowe10}
adresses a similar issue in a different setting in which matches are
spans instead of morphisms.  In section~\ref{sec:graph} we check that
the functoriality property holds for many usual algebraic graph
transformation approaches like double pushouts (DPO)
\cite{CorradiniMREHL97}, single pushouts (SPO) \cite{Lowe93},
sesqui-pushouts (SqPO) \cite{CorradiniHHK06} and heterogeneous
pushouts (HPO) \cite{DuvalEP09}.
Then in section~\ref{sec:garbage} we look at garbage removal 
as a categorical rewriting system,
in two different ways. This yields a categorical rewriting system 
which is functorial, and another one which is not functorial. 
We refer to \cite{MacLane} for categorical notions:
mainly commutative diagrams, functors, pushouts and pullbacks,
comma categories. 
The class of objects of a category $\catC$ is denoted as $|\catC|$.
A subcategory $\catM$ of a category $\catC$ 
is called a wide subcategory of $\catC$ 
if it has the same objects as $\catC$.

\section{Categorical rewriting systems} 
\label{sec:crs}

\subsection{Definition of categorical rewriting systems} 
\label{subsec:crs-defi}

\begin{definition}
\label{definition:crs-defi}
A \emph{categorical rewriting system} 
$(\spa{\catL}{\funL}{\catP}{\funR}{\catR},\funS)$
is made of a span of categories 
  $$ \xymatrix@=1pc{
  & \ar[ld]_{\funL} \catP \ar[rd]^{\funR} & \\
  \catL & & \catR \\ 
  } $$
and a family of partial functions 
  $$ \funS=(\funS_\rho)_{\rho\in|\catP|} $$ 
where for each object $\rho$ in $\catP$, 
the partial function $\funS_\rho$, 
from the set of morphisms in $\catL$ with source $\funL(\rho)$ 
to the set of morphisms in $\catP$ with source $\rho$, 
is such that $\funL(\funS_\rho(f))=f$ for every $f$ in the domain 
of $\funS_\rho$.
The objects of $\catP$ are the \emph{rewrite rules} or \emph{productions},
the morphisms of $\catL$ and $\catR$ are the 
left-hand side and right-hand side \emph{matches},
and the partial function $\funS_\rho$ is the \emph{rewriting process} function 
with respect to $\rho$; its domain is denoted as $\dom(\funS_\rho)$.
Given a rule $\rho$, the \emph{rewrite step applying $\rho$} 
is the partial function 
from the set of morphisms in $\catL$ with source $\funL(\rho)$ 
to the set of morphisms in $\catR$ with source $\funR(\rho)$ 
which maps every match $f$ in $\dom(\funS_\rho)$
to the match $g=\funR(\funS_\rho(f))$. 
The target $R_1$ of $g$ may be called the \emph{derived} object, 
with respect to the rule $\rho$ and the match $f$. 
\end{definition}

\begin{remark}
Many categorical rewriting systems are such that $\catL=\catR$,
then this category is denoted as $\catM$.
For the interested 
reader we refer to \cite{DuvalEP11} as an example of a rewriting system
defined by composition of rewriting systems (such composition is defined 
in section \ref{subsec:crs-comp}) in which $\catL \not =\catR$. 
\end{remark}

\begin{remark}
Each categorical rewriting system with $\catL=\catR=\catM$ 
determines an \emph{abstract rewriting system} 
on the objects of $\catM$, i.e., a binary relation $\arsto$ on $|\catM|$,
defined by $L\arsto R$ if and only if there is some $\rho$ in $\catP$
such that $L=\funL(\rho)$ and $R=\funR(\rho)$. 
\end{remark}

In a categorical rewriting system, 
the matches introduce a  ``vertical dimension'',
in addition to the ``horizontal dimension'' provided by the rules.
A rule $\rho$ with $\funL(\rho)=L$ and $\funR(\rho)=R$ 
is denoted as $\rho: L \arsto R$. 
It should be noted that, although $\rho$ is an \emph{object} in 
the category $\catP$, it is 
represented as an \emph{arrow} from its left-hand side $L$ 
to its right-hand side $R$; 
this refers to the usual notation for rewriting systems.  
Whenever $\catP$ is a category of arrows, 
it may happen that $\rho$ actually is a morphism in some category 
$\catD$, with either $\rho:L\to R$ 
(as in sections~\ref{subsec:graph-spo} and~\ref{subsec:graph-hpo})
or $\rho:R\to L$ 
(as in sections~\ref{subsec:graph-dpo} and~\ref{subsec:graph-sqpo}).
A morphism $\pi:\rho\to \rho_1$ in $\catP$, 
with $\funL(\pi)=f:L\to L_1$ and $\funR(\pi)=g:R\to R_1$,
is illustrated as follows:
$$ \xymatrix@C=2pc@R=1.5pc{
L \ar[d]_{f} \ar@{~>}[rr]^{\rho} & \ar@{}[d]|{\pi} & R \ar[d]^{g} \\ 
L_1 \ar@{~>}[rr]_{\rho_1} && R_1 \\ 
}$$
Then, each rewriting process $\funS_\rho$ can be illustrated as:
$$ \xymatrix@C=2pc@R=1.5pc{
L \ar[d]_{f} \ar@{~>}[rr]^{\rho} & & R \\ 
L_1 &&  \\ 
} \quad
\xymatrix@R=.8pc{ \\ \ar@{|->}[r]^{\funS_\rho} & \\}  \quad
\xymatrix@C=2pc@R=1.5pc{
L \ar[d]_{f} \ar@{~>}[rr]^{\rho} & \ar@{}[d]|{\funS_\rho(f)} & R \ar[d]^{g} \\ 
L_1 \ar@{~>}[rr]_{\rho_1} && R_1 \\ 
}$$

For instance, definition~\ref{definition:crs-po} below provides 
categorical rewriting systems based on pushouts. 
As usual a category \emph{with pushouts} is a category $\catC$ 
such that for every morphisms $f$ and $\rho$ in $\catC$ with the same source,  
the pushout of $\rho$ and $f$ exists in $\catC$. 
The \emph{category of arrows} of any category $\catC$ is denoted $\catC^{\to}$:
its objects are the morphisms of $\catC$ 
and its morphisms are the commutative squares in $\catC$. 

\begin{definition}
\label{definition:crs-po}
Let $\catC$ be a category with pushouts. 
The \emph{categorical rewriting system based on pushouts in $\catC$},
denoted as $\rs_{\po,\catC}$, is made of 
the categories $\catL=\catR=\catC$ and $\catP=\catC^{\to}$, 
the source functor $\funL=\funSrc:\catC^{\to}\to\catC$,  
the target functor $\funR=\funTgt:\catC^{\to}\to\catC$, 
and the family of functions $\funS_{\po}$ such that
for each rule $\rho$ the function $\funS_{\po,\rho}$ is total and 
for each match $f$ the commutative square 
$\funS_{\po,\rho}(f)$ is defined as the pushout of $\rho$ and $f$ in $\catC$. 
$$ \xymatrix@C=2pc@R=1.5pc{
L \ar[d]_{f} \ar[rr]^{\rho} & & R \\ 
L_1 &&  \\ 
} \quad
\xymatrix@R=.8pc{ \\ \ar@{|->}[r]^{\funS_{\po,\rho}} & \\}  \quad
\xymatrix@C=2pc@R=1.5pc{
L \ar[d]_{f} \ar[rr]^{\rho} & \ar@{}[d]|{\funS_{\po,\rho}(f)} & R \ar[d]^{g} \\ 
L_1 \ar[rr]_{\rho_1} && R_1 
    \ar@{-}[]+L+<-6pt,+1pt>;[]+LU+<-6pt,+6pt> 
    \ar@{-}[]+U+<-1pt,+6pt>;[]+LU+<-6pt,+6pt> 
\\ 
}$$
\end{definition}

In section~\ref{sec:graph}, we consider categorical rewriting systems 
which generalize the pushout rewriting systems. 
There is a need for these generalizations, 
since there may be restrictions (e.g., injectivity conditions
or gluing conditions) 
on the morphisms used for rules and for matches. 
These generalizations are built according to the following patterns. 

\begin{definition}
\label{definition:crs-arrows}
Let $\catC$ be a category with two wide subcategories $\catM$ and $\catD$. 
The \emph{generalized arrow category} $\catD^{\to\catM}$ (in $\catC$)
is the following category:
the objects in 
$\catD^{\to\catM}$ are the morphisms in $\catD$, and the morphisms
from $\rho$ to $\rho_1$ in $\catD^{\to\catM}$, where $\rho:L\to R$ and
$\rho_1:L_1\to R_1$ in $\catD$, are the pairs $(f:L\to L_1, g:R\to
R_1)$ of morphisms in $\catM$ such that $g\circ\rho = \rho_1\circ f$ in $\catC$.
The \emph{source functor} $\funSrc:\catD^{\to\catM}\to\catM$ and the
\emph{target functor} $\funTgt:\catD^{\to\catM}\to\catM$ map each
object $\rho$ in $\catD^{\to\catM}$ to its source and target,
when $\rho$ is seen as a morphism in $\catD$;
they map each morphism $(f,g)$ in $\catD^{\to\catM}$ 
to the morphisms $f$ and $g$ in $\catM$, respectively.
\end{definition}

This situation yields two spans of categories where
$\catL=\catR=\catM$ and $\catP=\catD^{\to\catM}$, 
as defined below; these spans will be used for 
describing graph transformation systems as categorical rewriting systems 
in sections~\ref{sec:graph} and~\ref{sec:garbage}.

\begin{definition} 
\label{definition:crs-arrows-span} 
Let $\catC$ be a category with two wide subcategories $\catM$ and $\catD$. 
Let $\catD^{\to\catM}$ denote the corresponding generalized arrow category 
and $\funSrc,\funTgt:\catD^{\to\catM}\to\catM$ the source and target functors.
\begin{itemize}
\item The \emph{direct arrows-based span on $\catC$ with rules in $\catD$ 
and matches in $\catM$} is the span of categories 
$(\spa{\catM}{\funSrc}{\catD^{\to\catM}}{\funTgt}{\catM})$.
This means that 
a rule $\rho:L\arsto R$ is a morphism $\rho:L\mo R$ in $\catD$, 
a match is a morphism in $\catM$ and 
a morphism of rules (from $\rho$ to $\rho_1$) is a commutative square in $\catC$
with $f,g$ in $\catM$:
$$ \xymatrix@C=2pc@R=1.5pc{
L \ar[d]_{f} \ar[rr]^{\rho} & \ar@{}[d]|{=} & R \ar[d]^{g} \\ 
L_1 \ar[rr]_{\rho_1} && R_1 \\ 
} $$
\item The \emph{inverse arrows-based span on $\catC$ with rules in $\catD$ 
and matches in $\catM$} is the span of categories 
$(\spa{\catM}{\funTgt}{\catD^{\to\catM}}{\funSrc}{\catM})$.
This means that 
a rule $\rho:L\arsto R$ is a morphism $\rho:R\mo L$ in $\catD$, 
a match is a morphism in $\catM$ and 
a morphism of rules (from $\rho$ to $\rho_1$) is a commutative square in $\catC$
with $f,g$ in $\catM$:
$$\xymatrix@C=2pc@R=1.5pc{
L \ar[d]_{f} & \ar@{}[d]|{=} & R \ar[d]^{g} \ar[ll]_{\rho} \\ 
L_1 && R_1 \ar[ll]^{\rho_1} \\ 
}$$ 
\end{itemize}
\end{definition}

\begin{remark} 
\label{remark:crs-coslice}  
For any category $\catC$ and any object $X$ in $\catC$,
let $X\coslice\catC$ denote the \emph{coslice category} 
of objects of $\catC$ under $X$.
Then the objects of $X\coslice\catC$ are 
the morphisms in $\catC$ with source $X$.
Let $\rs= (\spa{\catL}{\funL}{\catP}{\funR}{\catR},\funS)$,
be a categorical rewriting system.  
For each rule $\rho:L\arsto R$ let 
$\funL_\rho:\rho\coslice\catP \to L\coslice\catL$
denote the functor induced by $\funL$. 
Then $\funS_\rho$ can be seen as a partial function 
$ \funS_\rho:|L\coslice\catL| \parmo |\rho \coslice\catP| $ 
such that $\funL_\rho\circ\funS_\rho$ is the identity of $\dom(\funS_\rho)$.
\end{remark} 

\subsection{Functoriality of categorical rewriting systems} 
\label{subsec:crs-fun}

A categorical rewriting system, when it is seen as an abstract rewriting system, 
is read ``horizontally'': it maps the left-hand side match $f:L\to L_1$ 
to the right-hand side match $g:R\to R_1$.
But it may also be read ``vertically'': it maps the rule $\rho:L\arsto R$ 
to the rule $\rho_1:L_1\arsto R_1$.
In this section we study a functoriality property of categorical rewriting systems 
from this ``vertical'' point of view;
a similar property is called ``vertical composition'' in \cite{Lowe10}. 
The statements and results below are given up to isomorphism.

\begin{definition}
\label{definition:crs-fun} 
A categorical rewriting system 
$(\spa{\catL}{\funL}{\catP}{\funR}{\catR},\funS)$ 
is \emph{functorial} if for each rule $\rho:L\arsto R$ the partial function $\funS_\rho$ 
satisfies:
\begin{itemize}
\item the identity $\id_L$ is in the domain of $\funS_\rho$ and 
  $$ \funS_\rho(\id_L)=\id_\rho \;. $$
$$ \xymatrix@C=2pc@R=1.5pc{
L \ar[d]_{\id_L} \ar@{~>}[rr]^{\rho} & & R \\ 
L &&  \\ 
} 
\quad\xymatrix@R=.8pc{ \\ \ar@{|->}[r]^{\funS_\rho} & \\} \quad
\xymatrix@C=2pc@R=1.5pc{
L \ar[d]_{\id_L} \ar@{~>}[rr]^{\rho} & \ar@{}[d]|{\id_\rho} & R \ar[d]^{\id_R} \\ 
L \ar@{~>}[rr]_{\rho} && R \\ 
}
$$
\item and for each pair of consecutive morphisms $f_1:L\to L_1$ 
and $f_2:L_1\to L_2$ in $\catL$, 
if $f_1\in\dom(\funS_\rho)$ and $f_2\in\dom(\funS_{\rho_1})$,  
where $\rho_1$ denotes the target of $\funS_\rho(f_1)\,$,
then $f_2\circ f_1\in\dom(\funS_\rho)$ and 
  $$ \funS_{\rho_1}(f_2)\circ \funS_\rho(f_1) = \funS_\rho(f_2\circ f_1) \;. $$
$$  \xymatrix@C=2pc@R=1.5pc{
L \ar[d]_{f_1} \ar@{~>}[rr]^{\rho} & & R \\ 
L_1 \ar[d]_{f_2} && \\ 
L_2 &&  \\ 
}
\quad\xymatrix@R=1.5pc{ \\ \ar@{|->}[r]^{\funS_\rho} & \\}  \quad
\xymatrix@C=2pc@R=1.5pc{
L \ar[d]_{f_1} \ar@{~>}[rr]^{\rho} & \ar@{}[d]|{\funS_\rho(f_1)} & R \ar[d]^{g_1} \\ 
L_1 \ar[d]_{f_2} \ar@{~>}[rr]|{\rho_1} & \ar@{}[d]|{\funS_{\rho_1}(f_2)} & R_1 \ar[d]^{g_2} \\ 
L_2 \ar@{~>}[rr]_{\rho_2} && R_2 \\ 
}  
\; \xymatrix@R=1.5pc{ \\ = \\}  \; 
\xymatrix@C=2pc@R=1.5pc{
L \ar[dd]|(.6){f_2\circ f_1} \ar@{~>}[rr]^{\rho} & \ar@{}[dd]|{\funS_\rho(f_2\circ f_1)} & 
  R \ar[dd]|(.6){g_2\circ g_1} \\ 
&\mbox{ \phantom{L}} & \\ 
L_2 \ar@{~>}[rr]_{\rho_2} && R_2 \\ 
}
$$
\end{itemize}
\end{definition}

For instance, using definition~\ref{definition:crs-po}, 
the next result is due to the well-known compositionality property of pushouts.

\begin{proposition}
\label{proposition:crs-po} 
Let $\catC$ be a category with pushouts.
The categorical rewriting system $\rs_{\po,\catC}$ is functorial. 
\end{proposition}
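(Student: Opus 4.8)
The plan is to verify the two conditions of Definition \ref{definition:crs-fun} directly, using the definition of $\rs_{\po,\catC}$ (Definition \ref{definition:crs-po}), in which $\catL=\catR=\catC$, $\catP=\catC^{\to}$, $\funL=\funSrc$, $\funR=\funTgt$, and $\funS_{\po,\rho}(f)$ is the pushout square of $\rho$ and $f$. Since every $\funS_{\po,\rho}$ is total, the domain hypotheses in Definition \ref{definition:crs-fun} are automatically satisfied, so only the two equalities (up to isomorphism) have to be checked.

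For the unit condition, I would observe that for any rule $\rho:L\to R$ in $\catC$, the square obtained by pushing out $\rho$ along $\id_L$ can be taken to be the commutative square with sides $\id_L$, $\rho$, $\rho$, $\id_R$: indeed a pushout of $\rho$ and $\id_L$ is given by $R$ together with $\rho:L\to R$ and $\id_R:R\to R$, since any competing cocone $(u:L\to Q,\; v:R\to Q)$ with $u = v\circ\rho$ factors uniquely through $v$. Hence $\funS_{\po,\rho}(\id_L)$ is, up to isomorphism, the identity morphism $\id_\rho$ of $\rho$ in $\catC^{\to}$, which is the first required equation.

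For the composition condition, let $f_1:L\to L_1$ and $f_2:L_1\to L_2$ be consecutive morphisms in $\catC$. Let $\funS_{\po,\rho}(f_1)$ be the pushout square of $\rho$ and $f_1$, with right-hand match $g_1:R\to R_1$ and induced rule $\rho_1:L_1\to R_1$; let $\funS_{\po,\rho_1}(f_2)$ be the pushout square of $\rho_1$ and $f_2$, with right-hand match $g_2:R_1\to R_2$ and induced rule $\rho_2:L_2\to R_2$. Stacking these two pushout squares vertically gives a commutative rectangle with left side $f_2\circ f_1$, top $\rho$, bottom $\rho_2$, and right side $g_2\circ g_1$, which is precisely the composite $\funS_{\po,\rho_1}(f_2)\circ\funS_{\po,\rho}(f_1)$ in $\catC^{\to}$. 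The key step is to invoke the pushout pasting (composition) lemma: since both inner squares are pushouts, the outer rectangle is a pushout of $\rho$ along $f_2\circ f_1$. Therefore the outer rectangle is, up to isomorphism, the square $\funS_{\po,\rho}(f_2\circ f_1)$, which gives the second required equation and shows $\rs_{\po,\catC}$ is functorial.

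I expect the main (and essentially only) obstacle to be the bookkeeping of ``up to isomorphism'': pushouts are unique only up to canonical isomorphism, so each equality above holds only after transporting along the comparison isomorphism between the chosen pushout objects, and one should note that this isomorphism is a morphism of $\catC^{\to}$ compatible with $\funSrc$ and $\funTgt$. Apart from this, the argument is just the standard pushout lemma, as anticipated in the remark preceding the proposition.
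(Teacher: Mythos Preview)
Your proof is correct and follows essentially the same approach as the paper, which simply appeals to the well-known compositionality property of pushouts; you have spelled out explicitly the unit case and the pushout pasting lemma that this appeal encapsulates. The only additional content in your write-up is the care about ``up to isomorphism,'' which the paper leaves implicit.
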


\begin{remark} 
\label{remark:crs-coslice-functorial}  
The name ``functorial'' comes from the interpretation of 
categorical rewriting systems in terms of coslice categories, 
as in remark~\ref{remark:crs-coslice}~: 
let $\rs= (\spa{\catL}{\funL}{\catP}{\funR}{\catR},\funS)$ 
be a categorical rewriting system,
and let us assume that for each rule $\rho:L\arsto R$ 
the rewriting process $\funS_\rho$ is total, 
which means that it is a total function 
$ \funS_\rho:|L\coslice\catL| \parmo |\rho \coslice\catP| $ 
such that $\funL_\rho\circ\funS_\rho$ is the identity of $|L\coslice\catL|$.
For each morphism $h:f_1\to f_2$ in $L\coslice\catL$,
i.e., for each morphism $h:L_1\to L_2$ in $\catL$ such that $h\circ f_1 = f_2$, 
let us define $\funS_\rho(h:f_1\to f_2)=\funS_{\rho_1}(h)$  
where $\rho_1$ is the target of $\funS_\rho(f_1)$ in $\catP$. 
Then it can be proved that 
$\rs$ is functorial if and only if for each rule $\rho:L\arsto R$,
$ \funS_\rho(\id_L)=\id_\rho$ and $\funS_\rho$ is a functor 
$ \funS_\rho:L\coslice\catL\to \rho\coslice\catP $.
\end{remark}

\subsection{Composition of categorical rewriting systems} 
\label{subsec:crs-comp}

In order to compose (``horizontally'') categorical rewriting systems, 
we use composition of spans:
given two spans of categories 
$\spa{\catL}{\funL}{\catP}{\funR}{\catR}$
and $\spa{\catL'}{\funL'}{\catP'}{\funR'}{\catR'}$ 
which are consecutive, in the sense that $\catR=\catL'$,
the composed span
$\spa{\catL}{\funL''}{\catP''}{\funR''}{\catR'}$
is obtained from the pullback of $\funR$ and $\funL'$,
as follows:
  $$ \xymatrix@=1pc{
  && \ar@/_4ex/[lldd]_{\funL''}^{=} \ar[ld] \catP'' \ar[rd] 
  \ar@/^4ex/[rrdd]^{\funR''}_{=}  
  \ar@{-}[]+R+<+9pt,-11pt>;[]+D+<+0pt,-15pt> 
                           \ar@{-}[]+L+<-9pt,-11pt>;[]+D+<+0pt,-15pt>
  &&  \\ 
  & \ar[ld]^{\funL} \catP \ar[rd]^{\funR} & 
  & \ar[ld]_{\funL'} \catP' \ar[rd]_{\funR'} & \\
  \catL & & \catR = \catL' & & \catR' \\ 
  } $$
The objects of $\catP''$ are the pairs $(\rho,\rho')$ with 
$\rho$ in $\catP$ and $\rho'$ in $\catP'$ such that 
$\funR(\rho)=\funL'(\rho')$. 
The morphisms from $\rho''=(\rho,\rho')$ to $\rho''_1=(\rho_1,\rho'_1)$ 
in $\catP''$ are the pairs $\pi''=(\pi,\pi')$ where 
$\pi:\rho\to\rho_1$ in $\catP$ and $\pi':\rho'\to\rho'_1$ in $\catP'$ 
are such that $\funR(\pi)=\funL'(\pi')$. 

\begin{definition}
\label{definition:crs-comp}
Let $\rs=(\spa{\catL}{\funL}{\catP}{\funR}{\catR},\funS)$
and $\rs'=(\spa{\catL'}{\funL'}{\catP'}{\funR'}{\catR'},\funS')$
be two categorical rewriting systems
which are consecutive, in the sense that $\catR=\catL'$. 
The \emph{composition} of $\rs$ and $\rs'$ is the categorical 
rewriting system  
$$\rs'\circ\rs=(\spa{\catL}{\funL''}{\catP''}{\funR''}{\catR'},
\funS''_{(\rho,\rho')}) $$ 
where $\spa{\catL}{\funL''}{\catP''}{\funR''}{\catR'}$ is the composition
of the spans in $\rs$ and $\rs'$ and where the family of partial functions 
$\funS''=(\funS''_{\rho''})_{\rho''\in|\catP''|}$ is defined as follows, 
for each $\rho''=(\rho,\rho')$ in $\catP''$:
the domain of $\funS''_{\rho''}$ is made of the 
morphisms $f$ in $\dom(\funS_\rho)$ such that 
$\funR(\funS_\rho(f))$ is in $\dom(\funS'_{\rho'})$, and for
each $f\in\dom(\funS''_{\rho''})$:   
   $$ \funS''_{(\rho,\rho')}(f) = (\funS_\rho(f), \funS'_{\rho'}(f')) 
   \; \mbox{ where }\; f'= \funR(\funS_\rho(f)) \;.$$
$$ 
\xymatrix@C=2pc@R=1.5pc{
L \ar[d]_{f} \ar@{~>}[rr]^{\rho} & \ar@{}[d]|{\funS_\rho(f)} & 
R=L' \ar[d]^{f'} \ar@{~>}[rr]^{\rho'} & \ar@{}[d]|{\funS_{\rho'}(f')} & 
R' \ar[d]^{f''}\\ 
L_1 \ar@{~>}[rr]_{\rho_1} && 
R_1=L'_1 \ar@{~>}[rr]_{\rho'_1} && 
R'_1 \\ 
}  \quad
\xymatrix@R=.8pc{  \\ = \\}  \quad
 \xymatrix@C=2pc@R=1.5pc{
L \ar[d]_{f} \ar@{~>}[rr]^{\rho''} & \ar@{}[d]|{\funS_{(\rho,\rho')}(f)} & R' \ar[d]^{f''} \\ 
L_1 \ar@{~>}[rr]_{\rho''_1} && R'_1 \\ 
}$$
\end{definition}

This composition gives rise to the bicategory of categorical rewriting systems
(as for spans, we get a bicategory rather than a category,
because the unicity of pushouts is only up to isomorphim).
The next result follows easily from the definitions.

\begin{proposition}
\label{proposition:crs-comp}
Let $\rs$ and $\rs'$ be two consecutive categorical rewriting systems.
If $\rs$ and $\rs'$ are functorial then $\rs'\circ\rs$ is functorial. 
\end{proposition}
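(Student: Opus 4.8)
The plan is to unwind Definition~\ref{definition:crs-fun} for the composite system $\rs'\circ\rs$ and reduce it, clause by clause, to functoriality of $\rs$ followed by functoriality of $\rs'$, using that $\funR$ is a functor as the bridge between the two. Fix a rule $\rho''=(\rho,\rho')$ of $\rs'\circ\rs$, so that $\rho:L\arsto R$ is a rule of $\rs$, $\rho':R\arsto R'$ is a rule of $\rs'$, and the compatibility condition $\funR(\rho)=\funL'(\rho')$ holds. Throughout I would keep in mind that objects and morphisms of the pullback category $\catP''$ are pairs, with identities and composition taken componentwise.

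First I would handle the identity clause. Functoriality of $\rs$ gives $\id_L\in\dom(\funS_\rho)$ and $\funS_\rho(\id_L)=\id_\rho$; applying the functor $\funR$ yields $\funR(\funS_\rho(\id_L))=\id_{\funR(\rho)}=\id_{\funL'(\rho')}$. Functoriality of $\rs'$ then says this identity lies in $\dom(\funS'_{\rho'})$ with value $\id_{\rho'}$. By the definition of $\funS''$ this is exactly what is needed to conclude $\id_L\in\dom(\funS''_{\rho''})$ and $\funS''_{\rho''}(\id_L)=(\id_\rho,\id_{\rho'})=\id_{\rho''}$.

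For the composition clause, take consecutive $f_1:L\to L_1$ and $f_2:L_1\to L_2$ with $f_1\in\dom(\funS''_{\rho''})$ and $f_2\in\dom(\funS''_{\rho''_1})$, where $\rho''_1=(\rho_1,\rho'_1)$ is the target of $\funS''_{\rho''}(f_1)$. Unwinding the definition of $\funS''$, this means $f_1\in\dom(\funS_\rho)$ and $f'_1:=\funR(\funS_\rho(f_1))\in\dom(\funS'_{\rho'})$, with $\rho_1$ the target of $\funS_\rho(f_1)$ and $\rho'_1$ the target of $\funS'_{\rho'}(f'_1)$; likewise $f_2\in\dom(\funS_{\rho_1})$ and $f'_2:=\funR(\funS_{\rho_1}(f_2))\in\dom(\funS'_{\rho'_1})$. (One should first note that $\rho''_1=(\rho_1,\rho'_1)$ is a legitimate object of $\catP''$: functoriality of $\funR$ makes $f'_1$ a morphism from $\funR(\rho)$ to $\funR(\rho_1)$, and $\funL'(\funS'_{\rho'}(f'_1))=f'_1$ forces $\funL'(\rho'_1)$ to equal the target $\funR(\rho_1)$ of $f'_1$, which is the required compatibility $\funR(\rho_1)=\funL'(\rho'_1)$.) Now functoriality of $\rs$ applied to $f_1,f_2$ gives $f_2\circ f_1\in\dom(\funS_\rho)$ and $\funS_\rho(f_2\circ f_1)=\funS_{\rho_1}(f_2)\circ\funS_\rho(f_1)$; applying $\funR$ gives $\funR(\funS_\rho(f_2\circ f_1))=f'_2\circ f'_1$. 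Since $f'_1\in\dom(\funS'_{\rho'})$, $f'_2\in\dom(\funS'_{\rho'_1})$, and $\rho'_1$ is the target of $\funS'_{\rho'}(f'_1)$, functoriality of $\rs'$ applied to $f'_1,f'_2$ gives $f'_2\circ f'_1\in\dom(\funS'_{\rho'})$ and $\funS'_{\rho'}(f'_2\circ f'_1)=\funS'_{\rho'_1}(f'_2)\circ\funS'_{\rho'}(f'_1)$. Together these two facts say precisely that $f_2\circ f_1\in\dom(\funS''_{\rho''})$, and then, using componentwise composition in $\catP''$, one gets $\funS''_{\rho''}(f_2\circ f_1)=\bigl(\funS_{\rho_1}(f_2)\circ\funS_\rho(f_1),\ \funS'_{\rho'_1}(f'_2)\circ\funS'_{\rho'}(f'_1)\bigr)=\funS''_{\rho''_1}(f_2)\circ\funS''_{\rho''}(f_1)$, which is the desired equation.

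I do not expect a genuine obstacle: once the composed span, the definition of $\funS''$, and componentwise composition in the pullback $\catP''$ are spelled out, every step is forced. The one point that must be got right is the role of $\funR$ as the bridge between the two systems: the value $f'=\funR(\funS_\rho(f))$ at which $\funS'_{\rho'}$ is evaluated is compatible with composition only because $\funR$ preserves composition, and it is exactly this that transports the chaining hypothesis of Definition~\ref{definition:crs-fun} for $\rs$ into the corresponding hypothesis for $\rs'$. As elsewhere in this section, the whole argument is to be read up to isomorphism.
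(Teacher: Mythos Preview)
Your proof is correct and is precisely the unfolding of definitions that the paper has in mind: the paper's own proof consists of the single sentence ``The next result follows easily from the definitions,'' and your argument supplies exactly those details, verifying the identity and composition clauses componentwise in the pullback $\catP''$ and using functoriality of $\funR$ to transport the hypotheses for $\rs$ into the correct hypotheses for $\rs'$.
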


\section{Functoriality of graph transformations}
\label{sec:graph}

Following \cite{EhrigPS73} a lot of graph transformation systems 
have been studied in an algebraic approach. 
We show that many of them can be seen as categorical rewriting systems
which satisfy the functoriality property.  
A direct arrows-based span  
is used in sections \ref{subsec:graph-spo} and \ref{subsec:graph-hpo} 
for single pushout and heterogeneous pushout rewriting systems.
In sections \ref{subsec:graph-dpo} and \ref{subsec:graph-sqpo},
for double pushout and sesqui-pushout rewriting systems, 
an inverse arrows-based span is used, then a direct one, 
and finally both are composed according to definition~\ref{definition:crs-comp}.  
We define a \emph{graph} as a set of \emph{nodes} and a set of \emph{edges} 
with two functions from edges to nodes called the \emph{source} 
and the \emph{target} functions. 
A \emph{morphism of graphs} is made of a function on nodes and a function on edges 
which preserve the sources and targets. 
This provides the category of graphs, denoted as $\catgraph$. 

\subsection{Single Pushout rewriting}
\label{subsec:graph-spo}

In this section we show that, under suitable assumptions, 
the single pushout approach to graph transformation (SPO) \cite{EhrigHKLRWC97}
can be seen as a categorical rewriting system.  
Let $\catM_\spo=\catgraph$ be the category of graphs.  
Let $\catC_\spo=\catgraphp$ be the category of graphs with partial morphisms, 
so that $\catM_\spo$ can be seen as a wide subcategory of $\catC_\spo$.
Let $\catD_\spo=\catgraphpm$ be the wide subcategory of $\catC_\spo$
with partial monomorphisms.
We consider the direct arrows-based span on $\catC_\spo$ 
with rules in $\catD_\spo$ and matches in $\catM_\spo$. 
Following \cite[Definition 7]{EhrigHKLRWC97}, 
given a rule $r:L\monoparmo R$, 
we say that a match $f:L\mo L_1$ is \emph{conflict-free} with respect to $r$ 
when $f$ does not identify any item (node or edge) in the domain of $r$ 
with an item outside this domain.
For each rule $r:L\monoparmo R$, 
we define $\funS_{\spo,r}$ as the partial function 
with domain the conflict-free matches with respect to $r$, 
such that $\funS_{\spo,r}(f)$ is the pushout of $f$ and $r$ in $\catgraphp$ 
for each $f$ in $\dom(\funS_{\spo,r})$. 
It follows from \cite[Proposition 5 and Lemma 8]{EhrigHKLRWC97} 
that this pushout exists, that $r_1$ is a partial monomorphism 
and that $g$ is a total morphism.  
$$ \xymatrix@C=2pc@R=1.5pc{
L \ar[d]_{f} \ar@{>-^>}[rr]^{r} & & R \\ 
L_1 &&  \\ 
} \quad
\xymatrix@R=.8pc{ \\ \ar@{|->}[r]^{\funS_{\spo,r}} & \\}  \quad
\xymatrix@C=2pc@R=1.5pc{
L \ar[d]_{f} \ar@{>-^>}[rr]^{r} & \ar@{}[d]|{\funS_{\spo,r}(f)} & R \ar[d]^{g} \\ 
L_1 \ar@{>-^>}[rr]_{r_1} && R_1 
    \ar@{-}[]+L+<-6pt,+1pt>;[]+LU+<-6pt,+6pt> 
    \ar@{-}[]+U+<-1pt,+6pt>;[]+LU+<-6pt,+6pt> 
\\ 
}$$ 

\begin{definition}
\label{definition:graph-spo}
The \emph{categorical rewriting system for graphs based on single pushouts},
denoted as $\rs_\spo$, 
is made of the direct arrows-based span on $\catC_\spo=\catgraphp$ 
with rules in $\catD_\spo=\catgraph$ 
and matches in $\catM_\spo=\catgraphpm$
together with the family of partial functions $\funS_\spo$
defined as above from pushouts in $\catgraphp$.
\end{definition}

\begin{lemma}
\label{lemma:graph-spo} 
Let us consider the categorical rewriting system $\rs_\spo$. 
Let $r:L \monoparmo R$ be a rule and $f_1:L \mo L_1$ a match 
which is conflict-free with respect to $r$. 
Let $R_1$ with $r_1:L_1 \monoparmo R_1$ and $g_1:R\mo R_1$ 
be the pushout of $r$ and $f_1$ in $\catgraphp$. 
Let $f_2:L_1 \mo L_2$ be a match 
which is conflict-free with respect to $r_1$. 
Then $f_2 \circ f_1$ is conflict-free with respect to $r$. 
\end{lemma}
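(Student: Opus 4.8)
The plan is to unwind the definition of \emph{conflict-free} and reduce the claim to a statement about how $f_1$ acts on the domain of $r$. By definition, to prove that $f_2\circ f_1$ is conflict-free with respect to $r$ I must take an item (node or edge) $x$ in $\dom(r)$ and an item $y$ in $L\setminus\dom(r)$ and show that $(f_2\circ f_1)(x)\neq (f_2\circ f_1)(y)$; so the whole argument amounts to producing this one inequality.

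The key step is to observe that $f_1$ maps $\dom(r)$ into $\dom(r_1)$ and maps $L\setminus\dom(r)$ into $L_1\setminus\dom(r_1)$. Indeed, the pushout data $(f_1,g_1)$ is by construction a morphism of rules from $r$ to $r_1$ in the sense of Definition~\ref{definition:crs-arrows}, so the square commutes in $\catgraphp$, i.e.\ $r_1\circ f_1 = g_1\circ r$ as partial morphisms $L \rightsquigarrow R_1$; in particular these two partial morphisms have the same domain. I then compute both domains. Since $f_1$ is total, $\dom(r_1\circ f_1)=f_1^{-1}(\dom(r_1))$; and since $g_1$ is total (this is part of the hypotheses, and is exactly what \cite[Lemma 8]{EhrigHKLRWC97} guarantees from conflict-freeness of $f_1$), $\dom(g_1\circ r)=\dom(r)$. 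Hence $f_1^{-1}(\dom(r_1))=\dom(r)$. Reading this equality in one direction gives: $x\in\dom(r)$ implies $f_1(x)\in\dom(r_1)$; reading it in the other direction gives: $y\notin\dom(r)$ implies $f_1(y)\notin\dom(r_1)$.

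It now suffices to apply the hypothesis that $f_2$ is conflict-free with respect to $r_1$ to the item $f_1(x)\in\dom(r_1)$ and the item $f_1(y)\in L_1\setminus\dom(r_1)$: this yields $f_2(f_1(x))\neq f_2(f_1(y))$, which is precisely $(f_2\circ f_1)(x)\neq (f_2\circ f_1)(y)$, as required. The only point needing care is the bookkeeping with domains of composites of partial morphisms, and in particular recognising that the crucial hypothesis is the \emph{totality of $g_1$} — this is what forces $\dom(g_1\circ r)$ to be all of $\dom(r)$ rather than something smaller — whereas the fact that $r_1$ is a partial monomorphism is not needed here. (A more computational alternative would be to first derive the explicit description $\dom(r_1)=L_1\setminus f_1(L\setminus\dom(r))$ of the pushout object, but routing the argument through commutativity of the square as above avoids touching the pushout construction at all.)
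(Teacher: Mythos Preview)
Your proof is correct and follows essentially the same approach as the paper's: both hinge on deriving $f_1^{-1}(\dom(r_1)) = \dom(r)$ from the commutativity $r_1 \circ f_1 = g_1 \circ r$ together with the totality of $f_1$ and $g_1$, and then invoking conflict-freeness of $f_2$. The paper phrases this by contradiction with an explicit case split on whether $f_1(x)=f_1(y)$, whereas your direct formulation absorbs that case into the domain analysis (since $f_1(x)\in\dom(r_1)$ and $f_1(y)\notin\dom(r_1)$ already force $f_1(x)\neq f_1(y)$) --- a minor streamlining, not a different route.
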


\begin{proof}
Let $f=f_2 \circ f_1:L \mo L_2$. 
The proof is done by contradiction. 
Let us assume that there are two items $x$ and $y$ in $L$
such that $f(x)=f(y)$, with $x\in\dom(r)$ and $y\not\in\dom(r)$. 
Then there are two cases:
  \begin{enumerate} 
  \item If $f_1(x)=f_1(y)$ 
    then $f_1$ is not conflict-free with respect to $r$.
  \item Otherwise let $x_1=f_1(x)$ and $y_1=f_1(y)$, so that $f_2(x_1)=f_2(y_1)$. 
    The commutativity of the square $\funS_{\spo,r}(f_1)$ 
    is written as $g_1\circ r = r_1\circ f_1 $. 
    This implies that $g_1\circ r $ and $r_1\circ f_1 $ have the same domain,
    and since $f_1$ and $g_1$ are total  
    this means that for each item $x$ in $L$, 
    $x\in\dom(r)$ if and only if $f_1(x)\in\dom(r_1)$.
    Thus, $x_1\in\dom(r_1)$ and $y_1\not\in\dom(r_1)$,
    so that $f_2$ is not conflict-free with respect to $r_1$.
  \end{enumerate}
\end{proof}

\begin{proposition}
\label{proposition:graph-spo} 
The categorical rewriting system $\rs_\spo$ is functorial.
\end{proposition}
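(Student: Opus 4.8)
The plan is to verify the two clauses of Definition~\ref{definition:crs-fun} for $\rs_\spo$. The first clause is immediate: the identity $\id_L:L\mo L$ is trivially conflict-free with respect to any rule $r:L\monoparmo R$ (it identifies no two distinct items at all), so $\id_L\in\dom(\funS_{\spo,r})$; and the pushout of $r$ along $\id_L$ is, up to isomorphism, $r$ itself with the cocone $(\id_R,r)$, i.e.\ the identity square $\id_\rho$ on the object $r$ of the generalized arrow category. Hence $\funS_{\spo,r}(\id_L)=\id_\rho$.

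For the second clause, let $r:L\monoparmo R$ be a rule, $f_1:L\mo L_1$ conflict-free with respect to $r$, and $f_2:L_1\mo L_2$ conflict-free with respect to $r_1$, where $r_1:L_1\monoparmo R_1$ (together with $g_1:R\mo R_1$) is the pushout of $r$ and $f_1$ in $\catgraphp$. First I would invoke Lemma~\ref{lemma:graph-spo} to conclude that $f=f_2\circ f_1$ is conflict-free with respect to $r$, so that $f\in\dom(\funS_{\spo,r})$; this is precisely the domain condition $f_2\circ f_1\in\dom(\funS_\rho)$ required. Then $\funS_{\spo,r}(f)$ is by definition the pushout square of $r$ and $f$ in $\catgraphp$, say with cocone $r_2:L_2\monoparmo R_2$ and $g:R\mo R_2$. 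Likewise $\funS_{\spo,r_1}(f_2)$ is the pushout square of $r_1$ and $f_2$, with cocone $r_2':L_2\monoparmo R_2'$ and $g_2:R_1\mo R_2'$. The goal is the equality (up to isomorphism) $\funS_{\spo,r_1}(f_2)\circ\funS_{\spo,r}(f_1)=\funS_{\spo,r}(f_2\circ f_1)$ in $\catD_\spo^{\to\catM_\spo}$, i.e.\ that the vertical pasting of the two pushout squares $\funS_{\spo,r}(f_1)$ and $\funS_{\spo,r_1}(f_2)$ is, as a commutative square, isomorphic to the single pushout square $\funS_{\spo,r}(f)$.

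The core of the argument is the pushout composition (pasting) lemma, exactly as in Proposition~\ref{proposition:crs-po}, but now applied inside $\catgraphp$ rather than inside a category with all pushouts: if the lower square $\funS_{\spo,r}(f_1)$ is a pushout and the upper square $\funS_{\spo,r_1}(f_2)$ is a pushout, then the outer rectangle (with top $r$, bottom $r_2'$, left $f_2\circ f_1$, right $g_2\circ g_1$) is a pushout of $r$ along $f_2\circ f_1$ in $\catgraphp$. Since $\funS_{\spo,r}(f)$ is by definition \emph{a} pushout of $r$ along $f$, and pushouts are unique up to isomorphism, the outer rectangle and $\funS_{\spo,r}(f)$ are isomorphic as objects of $\rho\coslice(\catD_\spo^{\to\catM_\spo})$, which is the desired equality. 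I expect the main obstacle to be purely a matter of hygiene rather than substance: one must check that the pasting lemma is available in $\catgraphp$ for the specific squares at hand — which is fine because each square is explicitly \emph{constructed} as a pushout in $\catgraphp$ (Definition~\ref{definition:graph-spo}), and the pasting lemma holds in any category for composable pushout squares — and that the various cocone legs $r_1, r_2, r_2'$ are again partial monomorphisms and $g_1, g_2, g$ total, which is guaranteed by \cite[Proposition 5 and Lemma 8]{EhrigHKLRWC97} as already recorded before Definition~\ref{definition:graph-spo}. With Lemma~\ref{lemma:graph-spo} handling the domain-closure condition and the pasting lemma handling the equation, there is nothing further to prove.
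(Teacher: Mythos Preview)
Your proposal is correct and follows essentially the same approach as the paper: the paper's proof simply cites Lemma~\ref{lemma:graph-spo} (for closure of the domain under composition of conflict-free matches) together with the well-known compositionality of pushouts. Your write-up spells out both ingredients in more detail, including the identity clause and the type-checking of the resulting cocone legs, but the underlying argument is the same.
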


\begin{proof}
This is due to lemma~\ref{lemma:graph-spo}
and to the well-known compositionality property of pushouts. 
\end{proof}

\subsection{Heterogeneous pushout rewriting}
\label{subsec:graph-hpo}

We now consider the heterogeneous pushout framework (HPO) presented in 
\cite{DuvalEP09}, which allows some deletion and cloning 
in the context of termgraph rewriting.
Given a set called the set of \emph{labels}, 
with an \emph{arity} (a natural number) for each label,
a \emph{termgraph} is a graph where some nodes are labeled, 
when a node $n$ has a label $\ell$  
then the successors of $n$ form a totally ordered set and 
their number is the arity of $\ell$, 
and when a node $n$ is unlabeled then it has no successor. 
If $G$ is a termgraph then $|G|$ denotes the set of nodes of $G$. 
A \emph{morphism of termgraphs} 
(respectively a \emph{partial morphism of termgraphs}) 
is a morphism of graphs
(respectively a partial morphism of graphs) 
which maps labeled nodes to labeled nodes,
preserving the labels and the ordering of the successors.  This provides 
the category of termgraphs $\cattermgraph$.
Let $\catM_\hpo=\cattermgraphm$ be the wide subcategory of $\cattermgraphm$
with monomorphisms.  Let $\catC_\hpo$ be the category
with the termgraphs as objects and with morphisms from $L$ to $R$ the
pairs $(\tau,\sigma)$ 
of partial termgraph morphisms $\tau:L\parmo R$ and $\sigma:R\parmo L$.  
Then $\catM_\hpo$
is considered as a wide subcategory of $\catC_\hpo$ by identifying
each total morphism of termgraphs $f:L\mo L_1$ to the pair
$(f,\omega)$ where $\omega:L_1\parmo L$ is nowhere defined. 
Let $\catD_\hpo$ be the wide subcategory of $\catC_\hpo$ with
morphisms the pairs $\rho=(\tau,\sigma):L\hetmo R$ such that the
domain of $\tau$ is the set of nodes of $L$ and the domain of $\sigma$
is a subset of the set of nodes of $R$. Moreover, every 
node $p \in |R|$ in the domain of $\sigma$ is either unlabelled or 
such that the node $q=\sigma(p) \in |L|$ is such that 
$p$ and $q$ share the same label and 
the successors of $p$ in $R$ are the image 
by $\tau$ of the successors of $q$ in $L$.  

We consider the direct arrows-based span on
$\catC_\hpo$ with rules in $\catD_\hpo$ and matches in $\catM_\hpo$.
Following \cite[Definitions~6 and~7]{DuvalEP09}, 
for each rule $\rho:L\hetmo R$ and each match $f:L\monomo L_1$, 
a \emph{heterogeneous cocone over $\rho$ and $f$} is 
made of a rule $\rho_1:L_1\hetmo R_1$ and a match $g:R\monomo R_1$ 
such that $\rho_1\circ f = g\circ \rho$ in $\catC_\hpo$. 
A \emph{morphism of heterogeneous cocones over $\rho$ and $f$}, 
say $h:(\rho_1,g)\to(\rho_1',g')$,  
is a morphism $h:R_1\mo R_1'$ in $\catM_\hpo$ such that 
$h\circ\rho_1=\rho'_1$ and $h\circ g=g'$ in $\catC_\hpo$.
This yields the category of heterogeneous cocones over $\rho$ and $f$,
and a \emph{heterogeneous pushout of $\rho$ and $f$} 
is defined as an initial object in this category.
The unicity of the heterogeneous pushout, up to isomorphism, 
is a consequence of its initiality property. 
Its existence is proven in \cite[theorem~1]{DuvalEP09} 
by providing an explicit construction. 
For each rule $\rho:L\hetmo R$  
let us define $\funS_{\hpo,\rho}$ as the total function 
such that $\funS_{\hpo,\rho}(f)$ is the heterogeneous pushout of $f$ 
and $\rho$ for each match $f$, which is denoted as:
$$  \xymatrix@C=2pc@R=1.5pc{
  L \ar@{^{<}-^{>}}[rr]^{\rho} \ar@{>->}[d]_{f} && 
      R  \\
   L_1 &  \\ }
\quad
\xymatrix@R=.8pc{ \\ \ar@{|->}[r]^{\funS_{\hpo,\rho}} & \\}  \quad
\xymatrix@C=2pc@R=1.5pc{
  L \ar@{^{<}-^{>}}[rr]^{\rho} \ar@{>->}[d]_{f} & 
    \ar@{}[d]|(.4){\funS_{\hpo,\rho}(f)} & R \ar@{>->}[d]^{g} \\
  L_1 \ar@{^{<}-^{>}}[rr]_{\rho_1} & & R_1 
    \ar@{--}[]+L+<-6pt,+1pt>;[]+LU+<-6pt,+6pt> 
    \ar@{--}[]+U+<-1pt,+6pt>;[]+LU+<-6pt,+6pt> 
\\ }$$
It follows from \cite[Proposition~1]{DuvalEP09} 
that this construction provides a rule $\rho_1$ and a match $g$,
so that we get a categorical rewriting system. 

\begin{definition}
\label{definition:graph-hpo}
The \emph{categorical rewriting system for termgraphs based on heterogeneous pushouts},
denoted as $\rs_\hpo$, 
is made of the direct arrows-based span on $\catC_\hpo$ 
with rules in $\catD_\hpo$ 
and matches in $\catM_\hpo=\cattermgraph$ 
together with the family of partial functions $\funS_\hpo$
defined as above from heterogeneous pushouts. 
\end{definition}

\begin{proposition}
\label{proposition:graph-hpo} 
The categorical rewriting system $\rs_\hpo$ is functorial.
\end{proposition}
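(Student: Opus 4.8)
The plan is to follow the pattern of Propositions~\ref{proposition:crs-po} and~\ref{proposition:graph-spo}. Here the rewriting process $\funS_{\hpo,\rho}$ is \emph{total}, so there is no analogue of Lemma~\ref{lemma:graph-spo} to establish; it suffices to verify the two clauses of Definition~\ref{definition:crs-fun} directly, using only the initiality property that defines a heterogeneous pushout. As everywhere in this section, all statements are understood up to isomorphism.

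For the identity clause, I would check that $(\rho,\id_R)$ is the heterogeneous pushout of $\rho$ and $\id_L$. It is a heterogeneous cocone over $\rho$ and $\id_L$ because $\rho\circ\id_L=\id_R\circ\rho$ in $\catC_\hpo$. Given any heterogeneous cocone $(\rho_1,g)$ over $\rho$ and $\id_L$, its commutativity reads $\rho_1=\rho_1\circ\id_L=g\circ\rho$, so $g$ itself is a morphism of heterogeneous cocones from $(\rho,\id_R)$ to $(\rho_1,g)$; and it is the only one, since any such morphism $h$ must satisfy $h\circ\id_R=g$. Hence $(\rho,\id_R)$ is initial, i.e.\ $\funS_{\hpo,\rho}(\id_L)=\id_\rho$.

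For the composition clause I would prove a pasting lemma for heterogeneous pushouts: if $f_1:L\monomo L_1$ and $f_2:L_1\monomo L_2$ are matches, $(\rho_1,g_1)$ is the heterogeneous pushout of $\rho$ and $f_1$, and $(\rho_2,g_2)$ is the heterogeneous pushout of $\rho_1$ and $f_2$, then $(\rho_2,g_2\circ g_1)$ is the heterogeneous pushout of $\rho$ and $f_2\circ f_1$ --- which is exactly the identity $\funS_{\rho_1}(f_2)\circ\funS_\rho(f_1)=\funS_\rho(f_2\circ f_1)$ required by functoriality ($f_2\circ f_1$ being automatically in $\dom(\funS_\rho)$ by totality). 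First, $(\rho_2,g_2\circ g_1)$ is a heterogeneous cocone over $\rho$ and $f_2\circ f_1$: chasing the two given squares, $\rho_2\circ(f_2\circ f_1)=(g_2\circ\rho_1)\circ f_1=(g_2\circ g_1)\circ\rho$. For initiality, take any heterogeneous cocone $(\rho',g')$ over $\rho$ and $f_2\circ f_1$. Then $(\rho'\circ f_2,g')$ is a heterogeneous cocone over $\rho$ and $f_1$, so initiality of $(\rho_1,g_1)$ yields a unique $h_1$ with $h_1\circ\rho_1=\rho'\circ f_2$ and $h_1\circ g_1=g'$; in turn $(\rho',h_1)$ is a heterogeneous cocone over $\rho_1$ and $f_2$, so initiality of $(\rho_2,g_2)$ yields a unique $h_2$ with $h_2\circ\rho_2=\rho'$ and $h_2\circ g_2=h_1$. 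One then checks that $h_2$ is a morphism of cocones $(\rho_2,g_2\circ g_1)\to(\rho',g')$, and that it is the only one, by pushing any competitor back through the two universal properties in turn.

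The main obstacle I anticipate is not the diagram chase, which is the familiar one for the pushout pasting lemma, but the fact that heterogeneous pushouts are defined as initial objects of ad hoc categories of cocones rather than as colimits in $\catC_\hpo$, so one cannot simply invoke the standard lemma. The delicate point is to justify that the auxiliary arrows in the argument are legitimate: one must check that $\rho'\circ f_2$ is again a morphism of $\catD_\hpo$, so that $(\rho'\circ f_2,g')$ genuinely is a heterogeneous cocone (and, likewise, that composites of matches are matches). This reduces to an elementary computation in $\catC_\hpo$: precomposing a morphism $(\tau',\sigma')$ of $\catD_\hpo$ with a match $(f_2,\omega)$, whose backward component $\omega$ is nowhere defined, gives the pair $(\tau'\circ f_2,\,\omega\circ\sigma')$, whose forward part is still total on all nodes of the source and whose backward part is still nowhere defined; hence it lies in $\catD_\hpo$ and the node/label conditions on it hold vacuously. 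Once this bookkeeping is settled, functoriality follows as above.
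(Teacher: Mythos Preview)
Your approach is exactly the paper's: derive functoriality from a compositionality (pasting) property of heterogeneous pushouts, which in turn follows from their defining initiality. The paper's own proof is a one-line sketch to this effect; you have spelled out the standard initiality argument for both the identity and the pasting clause, and you correctly isolate the only nontrivial bookkeeping point---that the auxiliary arrow $\rho'\circ f_2$ must again lie in $\catD_\hpo$ so that $(\rho'\circ f_2,g')$ is a legitimate heterogeneous cocone. Your resolution of that point (the backward component becomes nowhere defined, hence the node/label constraints are vacuous) is sound given how matches embed in $\catC_\hpo$; one small tightening is that the condition on the forward component in $\catD_\hpo$ is that its domain be \emph{exactly} the node set, not merely that it be defined on all nodes---but your computation actually gives this, since $\tau'$ is undefined on edges and $f_2$ sends edges to edges.
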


\begin{proof} 
The compositionality property of heterogeneous pushouts, 
similar to the compositionality property of pushouts,
follows easily from their initiality property.
Proposition~\ref{proposition:graph-hpo} is a consequence of this property.
\end{proof}

\subsection{Double pushout rewriting} 
\label{subsec:graph-dpo}

In this section
we check that under suitable assumptions 
the graph transformation based on double pushouts (DPO) 
\cite{CorradiniMREHL97} can be considered as a categorical rewriting system 
which is composed, in the sense of definition~\ref{definition:crs-comp}, 
of a categorical rewriting system based on pushout complements 
(as defined below) 
followed by a categorical rewriting system based on pushouts
(definition~\ref{definition:crs-po}). 
We restrict our study to cases where the pushout complement is unique. 
Let $\catM_\poc=\catC_\poc=\catgraph$ be the category of graphs.  
Let $\catD_\poc=\catgraphm$ be the wide subcategory of $\catC_\poc$
with injective morphisms.
We consider the inverse arrows-based span on $\catC_\poc$ 
with rules in $\catD_\poc$ 
and matches in $\catM_\poc$.
This means that a rule $\rho:L\arsto R$ is a monomorphism of graphs 
$\rho:R\monomo L$, or (according to the usual notations) $l:K\monomo L$.
Given a graph $G$ and a subgraph $H$ of $G$,
we denote as $G-H$ the \emph{partial graph} 
made of the nodes and edges in $G$ which are not in $H$,
with the restriction of the source and target functions.
In general $G-H$ is not a graph, since it can have \emph{dangling} edges, 
i.e., edges which are not in $H$ but which have their source or target in $H$. 
Following \cite[Proposition 9]{CorradiniMREHL97}, 
given a rule $l:K\monomo L$ 
we say that a match $f:L\mo L_1$ satisfies the \emph{gluing condition} 
with respect to $l$ if:
\begin{itemize}
  \item \emph{Dangling condition.} 
     If an edge $e_1$ in $L_1$ is incident to a node in $f(L-l(K))$ 
     then $e_1$ is in $f(L)$.
  \item \emph{Identification condition.} 
     If two nodes (respectively two edges) $x$ and $y$ in $L$ 
     are such that $x \ne y$ and $f(x)=f(y)$ 
     then $x$ and $y$ are in $l(K)$.
\end{itemize}
One can remark that if the dangling condition is satisfied then 
$L_1-f(L-l(K))$ is a graph. 
It is proven in \cite[Proposition 9]{CorradiniMREHL97} 
that when $f$ satisfies the gluing condition with respect to $l$
then the graph $K_1 = L_1-f(L-l(K))$ together with 
the inclusion $l_1:K_1\monomo L_1$ 
and the morphism $g:K\mo K_1$ which maps each node or edge $x$ to 
$f(l(x))$ 
forms a pushout complement of $l$ and $f$ in $\catgraph$, 
and in addition this pushout complement is unique up to isomorphism. 
For each rule $l:K\monomo L$  
we define $\funS_{\poc,l}$ as the partial function 
with domain the matches with source $L$ 
which satisfy the gluing condition with respect to $l$,  
such that $\funS_{\poc,\rho}(f)$ is the pushout complement of $l$ and $f$  
for each $f$ in $dom(\funS_{\poc,l})$: 
$$ \xymatrix@C=2pc@R=1.5pc{
L \ar[d]_{f} & & K \ar@{>->}[ll]_{l} \\ 
L_1 &&  \\ 
} \quad
\xymatrix@R=.8pc{ \\ \ar@{|->}[r]^{\funS_{\poc,\rho}} & \\}  \quad
\xymatrix@C=2pc@R=1.5pc{
L \ar[d]_{f} & \ar@{}[d]|{\funS_{\poc,\rho}(f)} & K \ar@{>->}[ll]_{l} \ar[d]^{g} \\ 
L_1 
    \ar@{-}[]+R+<+6pt,+1pt>;[]+RU+<+6pt,+6pt> 
    \ar@{-}[]+U+<+1pt,+6pt>;[]+RU+<+6pt,+6pt> 
&& K_1 \ar@{>->}[ll]_{l_1} \\ 
}$$ 

\begin{definition}
\label{definition:graph-dpo}
The \emph{categorical rewriting system for graphs based on pushout complements},
denoted as $\rs_\poc$, 
is made of the inverse arrows-based span on $\catC_\poc=\catgraph$ 
with rules in $\catD_\poc=\catgraphm$ 
and matches in $\catM_\poc=\catgraph$
together with the family of partial functions $\funS_\poc$
defined as above from pushout complements in $\catgraph$.
The \emph{categorical rewriting system for graphs based on double pushouts},
denoted as $\rs_\dpo$, 
is the composition of $\rs_\poc$ and $\rs_{\po,\catgraph}$
(from definition~\ref{definition:crs-po}). 
\end{definition}

\begin{lemma}
\label{lemma:graph-dpo} 
Let us consider the categorical rewriting system $\rs_\poc$. 
Let $l:K \monomo L$ be a rule and $f_1:L \mo L_1$ a match  
which satisfies the gluing condition with respect to $l$. 
Let $(K_1,l_1,g_1)$ be the pushout complement of $l$ and $f_1$. 
Let $f_2:L_1 \to L_2$ be a match  
which satisfies the gluing condition with respect to $l_1$. 
Then $f_2 \circ f_1$ satisfies the gluing condition with respect to $l$. 
\end{lemma}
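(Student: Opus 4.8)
The plan is to verify the two clauses of the gluing condition --- the dangling condition and the identification condition --- for the composite match $f = f_2\circ f_1:L\mo L_2$ with respect to $l$, reducing each clause to the same clause for $f_1$ with respect to $l$ and for $f_2$ with respect to $l_1$. The fact I would record first is a description of the part deleted at the first step: by the explicit pushout complement construction recalled above, $K_1 = L_1 - f_1(L - l(K))$ and $l_1:K_1\monomo L_1$ is the inclusion, so an item (node or edge) of $L_1$ lies outside $l_1(K_1)$ if and only if it is the image under $f_1$ of an item of $L$ lying outside $l(K)$; in particular $f_1$ carries $L-l(K)$ onto $L_1 - l_1(K_1)$, hence the nodes of $f(L - l(K))$ are exactly the nodes of $f_2(L_1 - l_1(K_1))$.

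For the identification condition, I would take items $x\neq y$ of $L$ with $f(x) = f(y)$. If $f_1(x) = f_1(y)$, the identification condition for $f_1$ already gives $x,y\in l(K)$. Otherwise $f_1(x)\neq f_1(y)$ while $f_2(f_1(x)) = f_2(f_1(y))$, so the identification condition for $f_2$ with respect to $l_1$ puts both $f_1(x)$ and $f_1(y)$ in $l_1(K_1)$; by the first paragraph an item of $L$ outside $l(K)$ would be sent by $f_1$ outside $l_1(K_1)$, so $x,y\in l(K)$. Either way the identification condition for $f$ holds.

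For the dangling condition, let $e_2$ be an edge of $L_2$ incident to a node $v_2$ of $f(L - l(K))$. By the first paragraph $v_2 = f_2(v_1)$ for some node $v_1$ of $L_1 - l_1(K_1)$, so the dangling condition for $f_2$ with respect to $l_1$ gives $e_2 = f_2(e_1)$ for some edge $e_1$ of $L_1$. Assume $v_2$ is the source of $e_2$, the target case being symmetric; then $f_2$ sends the source of $e_1$ to $v_2 = f_2(v_1)$. The decisive point --- the only step that is not a pure diagram chase --- is that the source of $e_1$ must be $v_1$ itself: if it were a different node, the identification condition for $f_2$ would force both it and $v_1$ into $l_1(K_1)$, contradicting $v_1\in L_1 - l_1(K_1)$. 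Hence $e_1$ is incident to $v_1$, a node of $L_1 - l_1(K_1)$, which by the first paragraph is a node of $f_1(L - l(K))$; so the dangling condition for $f_1$ with respect to $l$ yields $e_1 = f_1(e_0)$ for some edge $e_0$ of $L$, whence $e_2 = f_2(f_1(e_0)) = f(e_0)\in f(L)$, as required.

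I expect the main obstacle to be bookkeeping rather than anything conceptual: one must make sure that the edge $e_1$ returned by the dangling condition for $f_2$ is incident to the \emph{particular} node $v_1$ we need, not merely to some $f_2$-preimage of $v_2$ --- which is exactly where the identification condition for $f_2$ enters --- and one must keep straight that ``$L_1 - l_1(K_1)$'' and ``$f_1(L - l(K))$'' describe the same items. Note that no compositionality of pushouts is used in the lemma itself; that ingredient enters only afterwards, when this lemma is combined with the functoriality of pushout rewriting (Proposition~\ref{proposition:crs-po}) and with Proposition~\ref{proposition:crs-comp} to conclude that $\rs_\dpo$ is functorial.
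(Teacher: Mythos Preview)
Your proof is correct and follows essentially the same approach as the paper's: verify the identification and dangling conditions for $f=f_2\circ f_1$ separately, each time reducing to the corresponding conditions for $f_1$ (with respect to $l$) and for $f_2$ (with respect to $l_1$), using throughout the explicit description $K_1=L_1-f_1(L-l(K))$. The only organizational difference is in the dangling clause: the paper splits into the two cases ``$e_2$ has an $f_2$-preimage'' and ``$e_2$ does not'' (deriving a contradiction in the second via the dangling condition for $f_2$), whereas you apply the dangling condition for $f_2$ up front to produce the preimage $e_1$ and then proceed---a slightly more streamlined but equivalent arrangement; the subsequent use of the identification condition for $f_2$ to pin down the endpoint of $e_1$ is the same in both proofs.
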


\begin{proof}
Let $f=f_2 \circ f_1:L \to L_2$
We have to prove that $f$ satisfies the
dangling condition and the identification conditions with respect to $l$. 
  \begin{itemize}
    \item \emph{Dangling condition.}  
    Suppose that $f_1$ and $f_2$ verify the identification condition.
    Let $e_2$ be an edge in $L_2$ which is incident to 
    a node $x_2$ in $f(L - l(K))$. 
    We have to prove that $e_2$ is in $f_2(f_1(L))=f(L)$.
    There are two cases:
    \begin{enumerate}
      \item There exists an edge $e_1$ in $L_1$ such that
            $e_2=f_2(e_1)$. Let $x$ be a node in $L-l(K)$ such that 
            $x_2=f(x)$, and $x_1=f_1(x)$. We know that $e_1$ is incident to 
            $x_1$ since $f_2(e_1)$ is incident to $x_2$, indeed if it were 
            not the case then $f_2(x_1)=f_2(z)=x_2$ with $z \not = x_1$ and 
            the identification condition of $f_2$ would be violated because 
            $z,x_1$ are not in $l_1(K_1)$. Moreover, 
            since $f_1$ satisfies the 
            dangling condition with respect to $l$ then $e_1$
            is in $f_1(L)$, thus $f_2(e_1)=e_2$ is in $f_2(f_1(L))=f(L)$. 
      \item The edge $e_2$ has no $f_2$-antecedent in $L_1$.  Let $x$
        be a node of $L-l(K)$ such that $f(x)=x_2$. Let $x_1=f_1(x)$,
        then $x_{1} \in L_1-l_1(K_1)$ because let $(K_1,l_1,g_1)$ be 
        the pushout complement of $l$ and $f_1$, it is unique and $K_1$ is
        the subgraph of $L$ obtained by removing all items that are in
        the image of $f_1$ but not in the image of $f_1 \circ l$ (see
        \cite[Proposition 9]{CorradiniMREHL97}). Thus $e_2$ is an edge
        incident to a node of $f_2(L_1-l_1(K_1))$.  Since $f_2$
        satisfies the dangling condition with respect to $l_1$, we
        know that $e_2$ is in $f_2(L_1)$, which contradicts our
        hypothesis that $e_2$ has no $f_2$-antecedent. Thus, this case
        cannot occur.
    \end{enumerate}
    \item \emph{Identification condition.} 
    Suppose that there are two items $x,y \in L$ 
    such that $x \ne y$ and $f(x) = f(y)$.
    We have to prove that $x$ and $y$ are in $l(K)$. 
    Then there are two cases:
       \begin{enumerate}
        \item If $f_1(x) = f_1(y)$, the identification condition 
          of $f_1$ with respect to $l$ implies that $x$ and $y$ are in $l(K)$. 
        \item If $f_1(x) \ne f_1(y)$, let $x_1=f_1(x)$ and $y_1=f_1(y)$,
          so that $x_1 \ne y_1$ and $f_2(x_1) = f_2(y_1)$. 
          The identification condition 
          of $f_2$ with respect to $l_1$ implies that 
          $x_1$ and $y_1$ are in $l_1(K_1)$. Now since 
          $K_1=L_1-f_1(L-l(K))$ and $x_1, y_1$ are in $f_1(L)$, it 
          implies that  they are in $l(K)$. 
      \end{enumerate}
  \end{itemize}
\end{proof}

\begin{proposition}
\label{proposition:graph-dpo} 
The categorical rewriting systems $\rs_\poc$ and $\rs_\dpo$ are functorial.
\end{proposition}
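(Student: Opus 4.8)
The plan is to prove that $\rs_\poc$ is functorial by checking the two clauses of Definition~\ref{definition:crs-fun} directly, and then to obtain the functoriality of $\rs_\dpo$ from the composition results. Indeed, by Definition~\ref{definition:graph-dpo} we have $\rs_\dpo=\rs_{\po,\catgraph}\circ\rs_\poc$, and these two systems are consecutive since the target category of $\rs_\poc$ and the source category of $\rs_{\po,\catgraph}$ are both $\catgraph$. Since $\rs_{\po,\catgraph}$ is functorial by Proposition~\ref{proposition:crs-po}, once $\rs_\poc$ is known to be functorial, Proposition~\ref{proposition:crs-comp} immediately yields that $\rs_\dpo$ is functorial. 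So the whole argument reduces to the functoriality of $\rs_\poc$.

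For the identity clause, fix a rule $l:K\hookrightarrow L$. The identity match $\id_L$ satisfies the gluing condition trivially: it is injective, so the identification condition is vacuous, and $L-\id_L(L-l(K))=l(K)$ is already a graph, so there is no dangling edge. Hence $\id_L\in\dom(\funS_{\poc,l})$, and since the square $\id_L\circ l=l\circ\id_K$ is obviously a pushout, the pushout complement of $l$ and $\id_L$ is, up to isomorphism, $(K,l,\id_K)$, that is, $\funS_{\poc,l}(\id_L)=\id_l$.

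For the composition clause, take $f_1:L\to L_1$ and $f_2:L_1\to L_2$ with $f_1\in\dom(\funS_{\poc,l})$ and $f_2\in\dom(\funS_{\poc,l_1})$, where $\funS_{\poc,l}(f_1)$ has target $l_1:K_1\hookrightarrow L_1$ and consists of the pushout-complement square with vertical arrows $f_1$ and $g_1:K\to K_1$, while $\funS_{\poc,l_1}(f_2)$ has target $l_2:K_2\hookrightarrow L_2$ and vertical arrows $f_2$ and $g_2:K_1\to K_2$. First, Lemma~\ref{lemma:graph-dpo} gives that $f_2\circ f_1$ satisfies the gluing condition with respect to $l$, so $f_2\circ f_1\in\dom(\funS_{\poc,l})$. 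Next, stacking the two pushout-complement squares vertically and invoking the well-known compositionality property of pushouts (a vertical pasting of two pushout squares is again a pushout), the outer rectangle is a pushout, hence $(K_2,l_2,g_2\circ g_1)$ is a pushout complement of $l$ and $f_2\circ f_1$. As $f_2\circ f_1$ satisfies the gluing condition, this pushout complement is unique up to isomorphism \cite[Proposition~9]{CorradiniMREHL97}, so $\funS_{\poc,l}(f_2\circ f_1)$ is the morphism of rules $l\to l_2$ with vertical components $f_2\circ f_1$ and $g_2\circ g_1$; since composition in the generalized arrow category is componentwise, this equals $\funS_{\poc,l_1}(f_2)\circ\funS_{\poc,l}(f_1)$. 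Thus $\rs_\poc$ is functorial, and with it $\rs_\dpo$.

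I do not expect a real obstacle here. The substantive content --- that a composite of two matches each satisfying the gluing condition again satisfies it, the interplay between the dangling and identification conditions being the only subtle aspect --- is entirely contained in Lemma~\ref{lemma:graph-dpo}, which has already been proved. What is left is a mechanical verification of Definition~\ref{definition:crs-fun}; the only points requiring a little care are keeping the statements up to isomorphism when appealing to uniqueness of pushout complements, and orienting the two pushout-complement squares so that the compositionality property of pushouts applies in its unconditional ``two pushouts paste to a pushout'' form.
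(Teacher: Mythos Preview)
Your proposal is correct and follows essentially the same route as the paper: the paper's proof simply states that functoriality of $\rs_\poc$ follows from Lemma~\ref{lemma:graph-dpo} together with the compositionality property of pushouts, and that functoriality of $\rs_\dpo$ then follows from Proposition~\ref{proposition:crs-po} and Proposition~\ref{proposition:crs-comp}. You have unpacked exactly these ingredients, adding the (trivial) identity clause and the appeal to uniqueness of the pushout complement explicitly, which the paper leaves implicit.
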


\begin{proof} 
  The functoriality of $\rs_\poc$ follows from lemma~\ref{lemma:graph-dpo}
  and the compositionality property of pushouts.
  Then the functoriality of $\rs_\dpo$ follows from 
  the functoriality of $\rs_{\po,\catgraph}$
  (proposition~\ref{proposition:crs-po}) 
  and from proposition~\ref{proposition:crs-comp}. 
\end{proof} 
  
\subsection{Sesqui-pushout rewriting}
\label{subsec:graph-sqpo}

Similarly to section~\ref{subsec:graph-dpo}, 
under suitable assumptions 
the graph transformation based on sesqui-pushouts (SqPO) 
\cite{CorradiniHHK06} can be considered as a categorical rewriting system 
which is composed 
of a categorical rewriting system based on final pullback complements 
(as defined below) 
followed by a categorical rewriting system based on pushouts. 
Final pullback complements are defined in \cite[Theorem 4.4]{DyckThol87} 
as follows.  
For each match $f:L\to L_1$ let us consider the slice categories 
$\catD\slice L$ and $\catD\slice L_1$ of objects of $\catD$ 
over $L$ and $L_1$, respectively.
Let $f^{*}:\catD\slice L_1 \to \catD\slice L$ denote the pullback functor, 
which maps each $l_1:K_1\to L_1$ to $f^{*}(l_1):K \to L$
such that there is a pullback square: 
$$ \xymatrix@C=2pc@R=1.5pc{
L \ar[d]_{f} & & K \ar[d] \ar[ll]_{f^{*}(l_1)}
    \ar@{-}[]+L+<-6pt,-1pt>;[]+LD+<-6pt,-6pt> 
    \ar@{-}[]+D+<-1pt,-6pt>;[]+LD+<-6pt,-6pt>  
 \\ 
L_1 && K_1  \ar[ll]^{l_1} \\ 
}$$ 
The \emph{Dyckhoff-Tholen condition} for $f$ states that 
the pullback functor $f^{*}$ has a right adjoint 
$f_{*}$ such that $f^{*} \circ f_{*}$ is the identity.
This last condition implies that 
the functor $f_{*}:\catD\slice L \to \catD\slice L_1$
provides a pullback complement for $f$ and $l$, for every $l:K \to L$, 
which is called the \emph{final pullback complement} (FPBC) of $f$ and $l$. 
The definition of the final pullback complement of $f$ and $l$ implies
that, when it does exist, it is unique. 
Let $\catC_\fpbc=\catgraph$ be the category of graphs, 
and let $\catgraphm$ be the category of graphs with monomorphisms, 
seen as a wide subcategory of $\catgraph$. 
Following \cite{CorradiniHHK06}, we define two kinds of 
rewriting systems based on FPBCs. 
In the first one the rules are monomorphisms,
in the second one the matches are monomorphisms.
In both cases we consider an inverse arrows-based span on $\catgraph$.
\begin{enumerate}
\item \emph{Left-linear rules.} 
Let $\catD_{\fpbc,1}=\catgraphm$ 
and $\catM_{\fpbc,1}=\catgraph$. 
Following \cite[definition~4]{CorradiniHHK06}, 
given a rule $l:K\monomo L$  
we say that a match $f:L\mo L_1$ is \emph{conflict-free} with respect to $l$ 
when $f$ does not identify any item in the image of $l$ 
with an item outside this image 
(note the similarity with the definition of conflict-free matches for SPO). 
For each rule $l:K\monomo L$ 
we define $\funS_{\fpbc,1,l}$ as the partial function
with domain the conflict-free matches with respect to $l$,
such that $\funS_{\fpbc,1,l}(f)$ is the final pullback complement 
of $l$ and $f$ in $\catgraph$, for each $f$ in $\dom(\funS_{\fpbc,1,l})$. 
It is proved in \cite[construction 5]{CorradiniHHK06}
that this final pullback complement exists,
and that it yields $l_1:K_1\monomo L_1$ and $g:K\mo K_1$.  
$$ \xymatrix@C=2pc@R=1.5pc{
L \ar[d]_{f} & & K \ar@{>->}[ll]_{l} \\ 
L_1  &&  \\ 
} \quad
\xymatrix@R=.8pc{ \\ \ar@{|->}[r]^{\funS_{\fpbc,1,l}} & \\}  \quad
\xymatrix@C=2pc@R=1.5pc{
L \ar[d]_{f} & \ar@{}[d]|{\funS_{\fpbc,1,l}(f)} & 
  K \ar[d]^{g} \ar@{>->}[ll]_{l}
    \ar@{-}[]+L+<-6pt,-1pt>;[]+LD+<-6pt,-6pt> 
    \ar@{-}[]+D+<-1pt,-6pt>;[]+LD+<-6pt,-6pt>  
 \\ 
L_1
&& K_1  \ar@{>->}[ll]^{l_1} 
\\ 
}$$
\item \emph{Monic matches.} 
Let $\catD_{\fpbc,2}=\catgraph$ and $\catM_{\fpbc,2}=\catgraphm$. 
Given a rule $l:K\mo L$ 
we define $\funS_{\fpbc,2,l}$ as the total function on $\catgraphm$ 
such that $\funS_{\fpbc,2,l}(f)$ is the final pullback complement 
of $l$ and $f$ in $\catgraph$, for each $f$ in $\catgraphm$. 
It is proved in \cite[construction 6]{CorradiniHHK06} 
that this final pullback complement exists,
and that it yields $l_1:K_1\mo L_1$ and $g:K\monomo K_1$.  

$$ \xymatrix@C=2pc@R=1.5pc{
L \ar@{>->}[d]_{f} & & K \ar[ll]_{l} \\ 
L_1  &&  \\ 
} \quad
\xymatrix@R=.8pc{ \\ \ar@{|->}[r]^{\funS_{\fpbc,2,l}} & \\}  \quad
\xymatrix@C=2pc@R=1.5pc{
L \ar@{>->}[d]_{f} & \ar@{}[d]|{\funS_{\fpbc,2,l}(f)} & 
  K \ar@{>->}[d]^{g} \ar[ll]_{l}
    \ar@{-}[]+L+<-6pt,-1pt>;[]+LD+<-6pt,-6pt> 
    \ar@{-}[]+D+<-1pt,-6pt>;[]+LD+<-6pt,-6pt>  
 \\ 
L_1
&& K_1  \ar[ll]^{l_1} 
\\ 
}$$
\end{enumerate}

\begin{definition}
\label{definition:graph-sqpo}
The \emph{categorical rewriting systems for graphs based on final pullback complements},
denoted as $\rs_{\fpbc,i}$ with $i=1$ or $i=2$, 
are made of the inverse arrows-based span on $\catgraph$ 
with rules in $\catgraphm$ and matches in $\catgraph$ when $i=1$,
and with rules in $\catgraph$ and matches in $\catgraphm$ when $i=2$, 
together with the family of functions $\funS_{\fpbc,i}$
defined as above 
from final pullback complements in $\catgraph$,
so that $\funS_{\fpbc,1}$ is partial and $\funS_{\fpbc,2}$ is total. 
For each $i\in\{1,2\}$, 
the \emph{categorical rewriting systems for graphs based on sesqui-pushouts},
denoted as $\rs_{\sqpo,i}$, 
is the composition of $\rs_{\fpbc,i}$ and $\rs_{\po,\catgraph}$
(from definition~\ref{definition:crs-po}). 
\end{definition}

\begin{lemma}
\label{lemma:graph-sqpo} 
Let us consider the categorical rewriting system $\rs_{\fpbc,1}$. 
Let $l:K \monomo L$ be a rule and $f_1:L \mo L_1$ a match  
which is conflict-free with respect to $l$. 
Let $(K_1,l_1,g_1)$ be the final pullback complement of $l$ and $f_1$. 
Let $f_2:L_1 \to L_2$ be a match  
which is conflict-free with respect to $l_1$. 
Then $f_2 \circ f_1$ is conflict-free with respect to $l$. 
\end{lemma}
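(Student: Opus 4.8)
The plan is to argue by contradiction, mirroring the proof of Lemma~\ref{lemma:graph-spo}. Put $f=f_2\circ f_1:L\to L_2$ and suppose $f$ is not conflict-free with respect to $l$, so that there are items (nodes or edges) $x,y$ in $L$ with $f(x)=f(y)$, $x\in l(K)$ and $y\notin l(K)$. Just as in the SPO case, I would distinguish two cases according to whether $f_1$ already identifies $x$ and $y$. If $f_1(x)=f_1(y)$, then $f_1$ identifies the item $x\in l(K)$ with the item $y\notin l(K)$, contradicting conflict-freeness of $f_1$ with respect to $l$. Otherwise I set $x_1=f_1(x)$ and $y_1=f_1(y)$, so that $x_1\ne y_1$ and $f_2(x_1)=f_2(y_1)$; the goal is then to prove that $x_1\in l_1(K_1)$ while $y_1\notin l_1(K_1)$, which shows that $f_2$ identifies an item inside $l_1(K_1)$ with one outside it, i.e., $f_2$ is not conflict-free with respect to $l_1$ --- the required contradiction.

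The second case rests on two facts about the final pullback complement $(K_1,l_1,g_1)$ of $l$ and $f_1$, namely (i) $f_1(l(K))\subseteq l_1(K_1)$ and (ii) $f_1(L\setminus l(K))\cap l_1(K_1)=\emptyset$. Fact (i) is immediate from the commuting square $l_1\circ g_1=f_1\circ l$: writing $x=l(k)$ we get $x_1=f_1(l(k))=l_1(g_1(k))\in l_1(K_1)$. Fact (ii) is where the explicit construction of the final pullback complement for a conflict-free match is used: by \cite[construction~5]{CorradiniHHK06}, $K_1$ is obtained from $L_1$ by deleting the items $f_1(z)$ for $z\in L\setminus l(K)$ together with the edges of $L_1$ that thereby become dangling, $l_1$ being the resulting inclusion and $g_1$ sending $k$ to $f_1(l(k))$; conflict-freeness of $f_1$ ensures that $f_1(l(K))$ and $f_1(L\setminus l(K))$ are disjoint, so this deletion removes no item of $f_1(l(K))$ and leaves $l_1(K_1)$ disjoint from $f_1(L\setminus l(K))$. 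Applying (ii) to $y\notin l(K)$ yields $y_1=f_1(y)\notin l_1(K_1)$, which completes case 2.

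The only delicate point is the bookkeeping behind fact (ii): unlike the pushout complement used for DPO, where the gluing condition forbids dangling edges, the final pullback complement for SqPO also deletes the edges that become dangling, so the description of $l_1(K_1)$ must mention them; this is harmless here, since the argument uses only the membership statement (i) and the disjointness statement (ii). With these in hand the lemma follows exactly as for SPO, and it is then meant to combine with the compositionality property of pushouts to give the functoriality of $\rs_{\fpbc,1}$ and hence, via Proposition~\ref{proposition:crs-comp}, of $\rs_{\sqpo,1}$.
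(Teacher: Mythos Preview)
Your proof is correct and follows essentially the same approach as the paper's: a contradiction argument with the same two-case split, using commutativity of the square for $x_1\in l_1(K_1)$ and the explicit construction of the final pullback complement for $y_1\notin l_1(K_1)$. You actually give more detail than the paper (spelling out facts (i) and (ii) and the role of conflict-freeness of $f_1$ in (ii)), and you cite \cite[construction~5]{CorradiniHHK06} where the paper cites construction~6 --- yours is the reference consistent with the left-linear case $\rs_{\fpbc,1}$.
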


\begin{proof}
Let $f=f_2 \circ f_1:L \mo L_2$. 
The proof is done by contradiction. 
Let us assume that there are two items $x$ and $y$ in $L$
such that $f(x)=f(y)$, with $x\in l(K)$ and $y\not\in l(K)$. 
Then there are two cases:
  \begin{enumerate} 
  \item If $f_1(x)=f_1(y)$ 
    then $f_1$ is not conflict-free with respect to $l$.
  \item Otherwise let $x_1=f_1(x)$ and $y_1=f_1(y)$, so that $f_2(x_1)=f_2(y_1)$. 
    The commutativity of the square $\funS_{\fpbc,l}(f_1)$ 
    implies that $x_1 \in l_1(K_1)$. 
    Moreover, the construction of the final pullback complement
    in \cite[construction 6]{CorradiniHHK06} shows that 
    $y_1 \not \in l_1(K_1)$ since $y \not \in l(K)$. 
    Thus, $x_1\in l_1(K_1)$ and $y_1\not\in l_1(K_1)$,
    so that $f_2$ is not conflict-free with respect to $l_1$.
  \end{enumerate}
\end{proof}

\begin{proposition}
\label{proposition:graph-sqpo} 
The categorical rewriting systems $\rs_{\fpbc,i}$ and $\rs_{\sqpo,i}$,
for $i=1$ and $i=2$, are functorial.
\end{proposition}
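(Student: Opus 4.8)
The plan is to follow the pattern established for $\rs_\dpo$ in section~\ref{subsec:graph-dpo}: reduce functoriality of each $\rs_{\fpbc,i}$ to two facts — the stability of the admissibility condition on matches under composition, and the compositionality of final pullback complements — and then obtain $\rs_{\sqpo,i}$ from $\rs_{\fpbc,i}$ and $\rs_{\po,\catgraph}$ via Propositions~\ref{proposition:crs-po} and~\ref{proposition:crs-comp}. For $i=1$ the first fact is exactly Lemma~\ref{lemma:graph-sqpo}: if $f_1$ is conflict-free with respect to the rule $l$ and $f_2$ is conflict-free with respect to the rule $l_1$ obtained from $\funS_{\fpbc,1,l}(f_1)$, then $f_2\circ f_1$ is conflict-free with respect to $l$, which is precisely the domain condition required by the second clause of Definition~\ref{definition:crs-fun}. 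For $i=2$ nothing is needed on this side, since $\funS_{\fpbc,2,l}$ is total on $\catgraphm$ and the composite of two monomorphisms is a monomorphism.

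It then remains, for both $i=1$ and $i=2$, to check the two equations of Definition~\ref{definition:crs-fun}. The identity equation is immediate: the pullback of a rule $l:K\to L$ along $\id_L$ is $l$ itself, and the defining universal property then forces $(K,l,\id_K)$ to be the final pullback complement of $l$ and $\id_L$, so $\funS_{\fpbc,i,l}(\id_L)=\id_l$. For the composition equation we must show that if $(K_1,l_1,g_1)$ is the final pullback complement of $l$ and $f_1$ and $(K_2,l_2,g_2)$ is the final pullback complement of $l_1$ and $f_2$, then the pasted square, equipped with the projection $g_2\circ g_1$, is the final pullback complement of $l$ and $f_2\circ f_1$. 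I would argue this from the characterization of a final pullback complement of $l$ and $f$ as a terminal object among the commutative squares over $(l,f)$ whose underlying square is a pullback: by the pullback pasting lemma the outer pasted square is again a pullback, and any competing pullback square over $(l,f_2\circ f_1)$ factors uniquely first through the terminal square over $(l_1,f_2)$ and then through the terminal square over $(l,f_1)$, which produces the unique mediating morphism. Equivalently, one may quote the explicit constructions in \cite[constructions~5 and~6]{CorradiniHHK06} and verify directly that performing them in two steps gives the same result as in one.

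The step I expect to be the main obstacle is precisely this compositionality of final pullback complements, the subtlety being that one cannot simply invoke the Dyckhoff--Tholen adjunction globally: in $\catgraph$ the pullback functor $f^{*}$ has the required right adjoint only for sufficiently restricted $f$ (which is exactly why the conflict-freeness and monicity conditions are imposed), so the argument must be carried out with the universal property of the individual final pullback complement squares rather than with a genuine adjunction on slice categories. Once this is settled, each $\rs_{\fpbc,i}$ is functorial; and since, by Definition~\ref{definition:graph-sqpo}, $\rs_{\sqpo,i}$ is the composition of $\rs_{\fpbc,i}$ with $\rs_{\po,\catgraph}$, which is functorial by Proposition~\ref{proposition:crs-po}, Proposition~\ref{proposition:crs-comp} yields that $\rs_{\sqpo,i}$ is functorial, for $i=1$ and $i=2$.
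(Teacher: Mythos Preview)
Your plan matches the paper's proof exactly --- the paper simply writes ``Similar to the proof of proposition~\ref{proposition:graph-dpo}'', i.e., Lemma~\ref{lemma:graph-sqpo} for the domain condition in case $i=1$ (nothing needed for $i=2$ since monomorphisms compose), compositionality of final pullback complements for the rewriting-process equation, and then Propositions~\ref{proposition:crs-po} and~\ref{proposition:crs-comp} to pass to $\rs_{\sqpo,i}$. One small caution on your sketch of the compositionality step: the finality in the universal property of a final pullback complement quantifies over \emph{all} pullback squares along $f$ whose top morphism factors through $l$, not only those whose top equals $l$, and it is this stronger form that lets the two-step factoring go through cleanly; your fallback to the explicit constructions~5 and~6 of \cite{CorradiniHHK06} sidesteps the issue in any case.
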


\begin{proof} 
Similar to the proof of proposition~\ref{proposition:graph-dpo}. 
\end{proof} 
 
A similar result (vertical composition of
sesqui-pushout graph transformations) is stated in 
\cite[proposition~5]{Lowe10}.

\section{A non-functorial graph transformation system}
\label{sec:garbage}


We define two \emph{garbage removal rewriting systems}, 
as two attempts to formalize
the process of removing unreachable nodes from a given graph.
One of these rewriting systems is not functorial, but the other is. 
Let $\catgraphi$ be the category of graphs with inclusions;
it is a preorder, thus every diagram in $\catgraphi$ is commutative. 
In both rewriting systems, the underlying span is the 
inverse arrows-based span on $\catgraphi$ 
with rules and matches in $\catgraphi$.

\subsection{Garbage removal}
\label{sec:garbage-garbage}

\begin{definition}
Let $L_1$ be a graph and $A$ a subgraph of $L_1$.
The set of nodes of $L_1$ which are \emph{reachable from $A$} ($A$ stands 
for $A$live nodes) 
is defined recursively, as follows: a node of $A$ is reachable from $A$, 
and the successors of a node reachable from $A$ are reachable from $A$. 
The subgraph of $L_1$ generated by the nodes reachable from $A$
is called the \emph{maximal subgraph of $L_1$ reachable from $A$},
it is denoted as $\gr{A}{L_1}$. 
\end{definition}

The aim of garbage removal is the determination of $\gr{A}{L_1}$. 
In fact, $\gr{A}{L_1}$ does not depend on the edges of $A$,
only on its nodes. The nodes of $A$ play the role of \emph{roots} 
for the graph $L_1$, 
with $\gr{A}{L_1}$ as the result of garbage removal from these roots.  
There are several categorical characterizations of $\gr{A}{L_1}$,
see for instance \cite{DuvalEP07}, but they are not used in this paper. 
Garbage removal provides a factorization 
of the inclusion $A\subseteq L_1$ in two inclusions 
$A\subseteq \gr{A}{L_1} \subseteq L_1$. This is denoted:  
  $$ \xymatrix@C=2pc@R=1.5pc{
  A \ar[d]  \\ 
  L_1  \\ 
  }\quad
\xymatrix@R=.8pc{ \\ \ar@{|->}[r]^{GC} & \\}  \quad
\xymatrix@C=2pc@R=1.5pc{
  A \ar[d] \ar[dr] & \ar@{}[dl]|(.7){GC} \\ 
  L_1 & \gr{A}{L_1} \ar[l] \\ 
  }$$
This ``triangular'' diagram is equivalent to the ``rectangular'' one: 
$$ \xymatrix@C=2pc@R=1.5pc{
A \ar[d] & & A \ar[ll] \\ 
L_1 &&  \\ 
} \quad
\xymatrix@R=.8pc{ \\ \ar@{|->}[r]^{GC} & \\}  \quad
\xymatrix@C=2pc@R=1.5pc{
A \ar[d] 
& \ar@{}[d]|{GC} & A \ar[d] \ar[ll] \\ 
L_1 && \gr{A}{L_1} \ar[ll] 
\\ 
}$$

\begin{example} 
\label{example:garbage}
Here are two simple examples, where $A$ is made of a single node. 
  $$
  \begin{array}{r|c|c|c|l}
  \cline{2-2} \cline{4-4}  
  A & a &  
  \xymatrix@R=0pc@C=1pc{ & \ar[l] } & 
  a & A \\
  \cline{2-2} \cline{4-4} 
  \multicolumn{1}{c}{} & 
  \multicolumn{1}{c}{\xymatrix@R=.8pc{ \ar[d] \\ \\ }} & 
  \multicolumn{1}{c}{\xymatrix@R=.8pc{ \ar@{}[d]|{GC} \\ \\ }} & 
  \multicolumn{1}{c}{\xymatrix@R=.8pc{ \ar[d] \\ \\ }}  & 
  \multicolumn{1}{c}{} \\ 
  \cline{2-2} \cline{4-4} 
  L_1 & \xymatrix@R=.7pc@C=1pc{a \ar[d] & b \\ c & \\ } & 
  \xymatrix@R=.5pc@C=1pc{ \\ & \ar[l] \\ } & 
  \xymatrix@R=.7pc@C=1pc{a \ar[d]  \\ c \\ }  &
  \gr{A}{L_1} \\  
  \cline{2-2} \cline{4-4}
  \end{array} 
  \qquad 
  \begin{array}{r|c|c|c|l}
  \cline{2-2} \cline{4-4}  
  A & a &  
  \xymatrix@R=0pc@C=1pc{ & \ar[l] } & 
  a & A \\
  \cline{2-2} \cline{4-4} 
  \multicolumn{1}{c}{} & 
  \multicolumn{1}{c}{\xymatrix@R=.8pc{ \ar[d] \\ \\ }} & 
  \multicolumn{1}{c}{\xymatrix@R=.8pc{ \ar@{}[d]|{GC} \\ \\ }} & 
  \multicolumn{1}{c}{\xymatrix@R=.8pc{ \ar[d] \\ \\ }}  & 
  \multicolumn{1}{c}{} \\ 
  \cline{2-2} \cline{4-4} 
  L_2 & \xymatrix@R=.7pc@C=1pc{a \ar[d] \ar[dr] && b \ar[d] \\ c & d & e \\} & 
  \xymatrix@R=.5pc@C=1pc{ \\ & \ar[l] \\ } & 
  \xymatrix@R=.7pc@C=1pc{a \ar[d] \ar[dr] &  \\ c & d \\ }  &
  \gr{A}{L_2} \\  
  \cline{2-2} \cline{4-4}
  \end{array} 
  $$
\end{example}

We generalize this situation by allowing the rules to be any  
inclusions $R\subseteq L$, not only identities; 
thus for instance the inclusion $\gr{A}{L_1}\subseteq L_1$ can be seen as a rule.
Then, garbage removal can be seen as a categorical rewriting system 
with respect to the 
inverse arrows-based span on $\catgraphi$ 
with rules and matches in $\catgraphi$. 
This can be done in two ways:
in section~\ref{subsec:garbage-lgr} the alive subgraph $A$ 
is the left-hand side~$L$ 
while in section~\ref{subsec:garbage-rgr} it is the right-hand side $R$.

\subsection{Garbage removal as a non-functorial graph rewriting system}
\label{subsec:garbage-lgr}

\begin{definition}
\label{definition:garbage-lgr}
The \emph{$L$-garbage removal rewriting system} $\rs_{\lgr}$ is defined as 
the inverse arrows-based span on $\catgraphi$ 
with rules and matches in $\catgraphi$ 
together with the total functions $\funS_{\lgr,\rho}$, for every $\rho:R\subseteq L$, 
which map each inclusion $L\subseteq L_1$
to the commutative square in $\catgraphi$ 
with vertices $L$, $R$, $L_1$ and $\gr{L}{L_1}$. 
$$ \xymatrix@C=2pc@R=1.5pc{
L \ar[d]_{f} & & R \ar[ll]_{\rho} \\ 
L_1 &&  \\ 
} \quad
\xymatrix@R=.8pc{ \\ \ar@{|->}[r]^{\funS_{\lgr,\rho}} & \\}  \quad
\xymatrix@C=2pc@R=1.5pc{
L \ar[d] \ar[rrd]  
& & R \ar[d] \ar[ll] \ar@{}[lld]|(.3){=}|(.7){GC} \\ 
L_1 && \gr{L}{L_1} \ar[ll] \\ 
}$$
\end{definition}

\begin{proposition}
\label{proposition:garbage-lgr}
The categorical rewriting system $\rs_{\lgr}$ is not functorial.
\end{proposition}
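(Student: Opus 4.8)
The strategy is to exhibit a single counterexample: a rule $\rho:R\subseteq L$ together with two consecutive matches $f_1:L\subseteq L_1$ and $f_2:L_1\subseteq L_2$ in $\catgraphi$ for which the two ways of composing the rewriting process disagree. Since $\funS_{\lgr,\rho}$ is always total, the obstruction must lie in the composition equation $\funS_{\rho_1}(f_2)\circ\funS_\rho(f_1)=\funS_\rho(f_2\circ f_1)$; concretely, applying garbage collection twice gives a right-hand side $\gr{\gr{L}{L_1}}{L_2}$, whereas applying it in one step gives $\gr{L}{L_2}$, and these need not coincide. The point of the construction is that after the first step the ``alive'' graph has shrunk from $L$ to $R_1=\gr{L}{L_1}$, so the second garbage collection is computed relative to the smaller root set $\gr{L}{L_1}$ rather than to $L$ itself.

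First I would choose the data so that $\rho$ actually discards something: take $L$ to have two nodes and $R\subsetneq L$ to be the subgraph on just one of them, say $R=\{a\}$ and $L=\{a,b\}$ with no edges. Then I would build $L_1$ so that inside $L_1$ the node $b$ becomes reachable from $a$ (e.g.\ add an edge $a\to b$ in $L_1$), so that $\gr{L}{L_1}=L_1$ contains $b$. Finally I would enlarge to $L_2$ by adding a further node $c$ and an edge $b\to c$, with no edge from $a$ to $c$. Then $\gr{L}{L_2}$ contains $a,b,c$ (since $b$ is reachable from $a$ via the edge in $L_2$, hence so is $c$), but $\gr{\gr{L}{L_1}}{L_2}$ is computed with alive set $\gr{L}{L_1}=L_1$, which already contains $b$, so the second collection is $\gr{L_1}{L_2}$, and that again equals all of $L_2$ — so I will need to be more careful: the mismatch I actually want is the \emph{opposite}, where the one-step result is strictly smaller. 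I would therefore instead arrange that $b$ is reachable from $a$ only \emph{after} passing to $L_1$ but that $L_1$ also contains a ``garbage'' node that $L_2$ makes reachable from $b$, whereas in $L_2$ the node $b$ is no longer reachable from $a$ directly — impossible in $\catgraphi$ since edges are never removed. The correct mismatch exploits instead that $\gr{L}{L_1}$ can be strictly larger than what $\gr{L}{-}$ sees later: pick $L=R=\{a\}$ is ruled out since we want $R\subsetneq L$; so take $L=\{a,b\}$, $R=\{a\}$, let $L_1=\{a,b\}$ with an edge $b\to a$ only, so $\gr{L}{L_1}=L_1$; then let $L_2$ add a node $c$ with edge $c\to b$. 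One-step: $\gr{L}{L_2}=\{a,b\}$ (since $a,b$ are roots; $c$ is not reachable). Two-step: the second process uses root set $\gr{L}{L_1}=\{a,b\}$ again, giving $\gr{\{a,b\}}{L_2}=\{a,b\}$. Still equal. The genuine failure must come from the fact that in the composite, the \emph{first} component of $\funS_{\rho_1}(f_2)$ must equal the first component of $\funS_\rho(f_2\circ f_1)$, namely the bottom-left edge $L_2\to L_2$, and the top edge: $\funS_\rho(f_2f_1)$ has right-hand side $\gr{L}{L_2}$ whereas $\funS_{\rho_1}(f_2)$ has right-hand side $\gr{R_1}{L_2}=\gr{\gr{L}{L_1}}{L_2}$, and the claim reduces to $\gr{\gr{L}{L_1}}{L_2}\neq\gr{L}{L_2}$ for a suitable choice. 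This fails precisely when $\gr{L}{L_1}$ contains nodes reachable in $L_2$ that $L$ alone does not reach in $L_2$.

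So the clean construction is: $L=\{a,b\}$ with an edge $a\to b$; $R=\{a\}$ (no edges), so $\rho:R\subseteq L$ is a legitimate rule. Take $L_1=L$ (the inclusion $f_1$ is the identity), so that $\gr{L}{L_1}=L=\{a,b\}$ with the edge $a\to b$; thus $R_1=\rho_1$ has left-hand side $L_1=\{a,b\}$ — but wait, $\rho_1$ as produced by $\funS_{\lgr,\rho}(f_1)$ is the inclusion $\gr{L}{L_1}\subseteq L_1$, i.e.\ $\{a,b\}\subseteq\{a,b\}$, the identity, so $\rho_1$ has the \emph{same} left-hand side $L$ that no longer forgets $b$. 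Now let $L_2=\{a,b,c\}$ with edges $a\to b$ and $b\to c$. Then $\gr{\gr{L}{L_1}}{L_2}=\gr{\{a,b\}}{L_2}=\{a,b,c\}$ since $c$ is reachable from $b$. But $\gr{L}{L_2}$: here $L$ is still $\{a,b\}$ so this is again $\{a,b,c\}$. Equal once more — the obstruction genuinely needs $R\subsetneq\gr{L}{L_1}$ with the extra nodes of $\gr{L}{L_1}$ reaching garbage in $L_2$ that is \emph{un}reachable from $L$ in $L_2$, which cannot happen since $L\supseteq R$ always. I conclude the failure must instead be of the \emph{identity law} or of the domain condition, or — most likely — it is that $\funS_{\rho_1}(f_2)$ is defined relative to the rule $\rho_1$ whose left-hand side is $L_1$, not $L$, so the composite square has top morphism $\rho_1$ on the left square and $\rho_2$ below, and the pasted outer square has top $\rho$ but the one-step $\funS_\rho(f_2f_1)$ also has top $\rho$: the discrepancy is in the right-hand vertical edge and right-hand side object, $\gr{L}{L_2}$ versus $\gr{\gr{L}{L_1}}{L_2}$. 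I would therefore finalize the example by taking $f_1:L\subseteq L_1$ \emph{non-trivial} so that $\gr{L}{L_1}\supsetneq L$ is forced to \emph{drop back down} — concretely $L=\{a\}$ with no edges but $R$ — no. The cleanest honest path: take $R=\emptyset$, $L=\{b\}$, $L_1=\{a,b\}$ with edge $a\to b$ so $\gr{L}{L_1}=\{b\}=L$; then $L_2=\{a,b\}$ with edge $a\to b$ — then $\gr{\gr{L}{L_1}}{L_2}=\gr{\{b\}}{L_2}=\{b\}$ but $\gr{L}{L_2}=\{b\}$. I therefore expect the actual counterexample in the paper to hinge on $L_1$ adding an edge that makes a node of $L$ reachable from $R$'s complement inside $L$, changing which subgraph is ``$\gr{L}{-}$'', and the main obstacle in writing this up cleanly is pinning down the \emph{smallest} such configuration and verifying both reachability computations; I would present it as a labelled diagram in the style of Example~\ref{example:garbage}, check $\funS_{\lgr,\rho}(f_2)\circ\funS_{\lgr,\rho}(f_1)\neq\funS_{\lgr,\rho}(f_2\circ f_1)$ by direct inspection of the right-hand corners, and that contradiction with Definition~\ref{definition:crs-fun} establishes non-functoriality.
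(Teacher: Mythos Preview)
Your proposal contains a genuine error that explains why every concrete attempt you make collapses into an equality. You repeatedly compute the two-step right-hand side as $\gr{\gr{L}{L_1}}{L_2}$, but this is wrong for $\rs_{\lgr}$. By Definition~\ref{definition:garbage-lgr}, the root set used by $\funS_{\lgr,\rho}$ is always the \emph{left-hand side} of the rule $\rho$. After the first step the derived rule is $\rho_1:\gr{L}{L_1}\subseteq L_1$, whose left-hand side is $L_1$, not $\gr{L}{L_1}$. Hence $\funS_{\lgr,\rho_1}(f_2)$ produces $\gr{L_1}{L_2}$ at the bottom-right, and the functoriality equation reduces to $\gr{L_1}{L_2}=\gr{L}{L_2}$, exactly as the paper's proof and diagram state.

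With the correct formula a counterexample is immediate: just let $L_1$ contain an item not reachable from $L$, and let $L_2$ make that item reach something new. For instance take $L=R=\{a\}$, let $L_1$ add an isolated node $b$, and let $L_2$ add an edge $b\to e$ to a fresh node $e$; then $e\in\gr{L_1}{L_2}$ but $e\notin\gr{L}{L_2}$. The paper's Example~\ref{example:garbage-lgr} does essentially this. What you were computing, namely $\gr{\gr{L}{L_1}}{L_2}$, is the two-step output of the $R$-garbage system $\rs_{\rgr}$, which \emph{is} functorial (Proposition~\ref{proposition:garbage-rgr}); this is precisely why your repeated searches for a discrepancy kept failing.
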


\begin{proof}
In general $\gr{L_1}{L_2}$ is not the same as $\gr{L}{L_2}$,
see example~\ref{example:garbage-lgr} below.
\end{proof}

$$
\xymatrix@C=2pc@R=1.5pc{
L \ar[d] \ar[rrd] 
& & R \ar[d] \ar[ll] \ar@{}[lld]|(.3){=}|(.7){GC} \\ 
L_1 \ar[d] \ar[rrd] 
& & R_1=\gr{L}{L_1} \ar[d] \ar[ll] \ar@{}[lld]|(.3){=}|(.7){GC} \\ 
L_2 && R_2=\gr{L_1}{L_2}  \ar[ll] 
\\  } 
\quad \xymatrix@C=2pc@R=2pc{ \\ \ne \\ } \quad 
\xymatrix@C=2pc@R=2pc{
L \ar[dd] \ar[rrdd] 
& & R \ar[dd] \ar[ll] \ar@{}[lldd]|(.3){=}|(.7){GC} \\ 
 \\ 
L_2 && \gr{L}{L_2} \ar[ll] 
\\  } 
$$

\begin{example} 
\label{example:garbage-lgr}
Let us apply $\rs_{\lgr}$ to $R=L\subseteq L_1 \subseteq L_2$ 
and to $R=L\subseteq L_2$, as in example~\ref{example:garbage}.
We get $\gr{L_1}{L_2} \ne \gr{L}{L_2}$.

  $$
  \begin{array}{r|c|c|c|l}
  \cline{2-2} \cline{4-4}  
  L & a &  
  \xymatrix@R=0pc@C=.6pc{ & \ar[l] } & 
  a & R \\
  \cline{2-2} \cline{4-4} 
  \multicolumn{1}{c}{} & 
  \multicolumn{1}{c}{\xymatrix@R=.8pc{ \ar[d] \\ \\ }} & 
  \multicolumn{1}{c}{} & 
  \multicolumn{1}{c}{\xymatrix@R=.8pc{ \ar[d] \\ \\ }}  & 
  \multicolumn{1}{c}{} \\ 
  \cline{2-2} \cline{4-4} 
  L_1 & \xymatrix@R=.7pc@C=.6pc{a \ar[d] & b \\ c & \\ } & 
  \xymatrix@R=.5pc@C=.6pc{ \\ & \ar[l] \\ } & 
  \xymatrix@R=.7pc@C=.6pc{a \ar[d]  \\ c \\ }  &
  \gr{L}{L_1} \\  
  \cline{2-2} \cline{4-4}
  \multicolumn{1}{c}{} & 
  \multicolumn{1}{c}{\xymatrix@R=.8pc{ \ar[d] \\ \\ }} & 
  \multicolumn{1}{c}{} & 
  \multicolumn{1}{c}{\xymatrix@R=.8pc{ \ar[d] \\ \\ }}  & 
  \multicolumn{1}{c}{} \\ 
  \cline{2-2} \cline{4-4} 
  L_2 & \xymatrix@R=.7pc@C=.6pc{a \ar[d] \ar[dr] && b \ar[d] \\ c & d & e \\} & 
  \xymatrix@R=.5pc@C=.6pc{ \\ & \ar[l] \\ } & 
  \xymatrix@R=.7pc@C=.6pc{a \ar[d] \ar[dr] && b \ar[d] \\ c & d & e \\ }  &
  \gr{L_1}{L_2} \\  
  \cline{2-2} \cline{4-4}
  \end{array} 
  \xymatrix@C=2pc@R=0pc{ \\ \ne \\ } 
  \begin{array}{r|c|c|c|l}
  \cline{2-2} \cline{4-4}  
  L & a &  
  \xymatrix@R=0pc@C=.6pc{ & \ar[l] } & 
  a & R \\
  \cline{2-2} \cline{4-4} 
  \multicolumn{1}{c}{} & 
  \multicolumn{1}{c}{\xymatrix@R=2.8pc{ \ar[d] \\ \\ }} & 
  \multicolumn{1}{c}{} & 
  \multicolumn{1}{c}{\xymatrix@R=2.8pc{ \ar[d] \\ \\ }}  & 
  \multicolumn{1}{c}{} \\ 
  \cline{2-2} \cline{4-4} 
  L_2 & \xymatrix@R=.7pc@C=.6pc{a \ar[d] \ar[dr] && b \ar[d] \\ c & d & e \\} & 
  \xymatrix@R=.5pc@C=.6pc{ \\ & \ar[l] \\ } & 
  \xymatrix@R=.7pc@C=.6pc{a \ar[d] \ar[dr] & \\ c & d  \\ }  &
  \gr{L}{L_2} \\  
  \cline{2-2} \cline{4-4}
  \end{array} 
  $$
\end{example} 

It turns out that if we choose the right-hand side of the rule 
instead of its left-hand side as the alive subgraph, 
the graph transformation system obtained is functorial:
this is done in the next section. 

\subsection{Garbage removal as a functorial graph rewriting system}
\label{subsec:garbage-rgr}

\begin{definition}
\label{definition:garbage-rgr}
The \emph{$R$-garbage removal rewriting system} $\rs_{\rgr}$ is defined as 
the inverse arrows-based span on $\catgraphi$ 
with rules and matches in $\catgraphi$ 
together with the total functions $\funS_{\rgr,\rho}$, for every $\rho:R\subseteq L$, 
which map each inclusion $L\subseteq L_1$
to the commutative square in $\catgraphi$ 
with vertices $L$, $R$, $L_1$ and $\gr{R}{L_1}$. 
$$ \xymatrix@C=2pc@R=1.5pc{
L \ar[d]_{f} & & R \ar[ll]_{\rho} \\ 
L_1 &&  \\ 
} \quad
\xymatrix@R=.8pc{ \\ \ar@{|->}[r]^{\funS_{\rgr,\rho}} & \\}  \quad
\xymatrix@C=2pc@R=1.5pc{
L \ar[d]_{f} \ar@{}[rrd]|(.3){=}|(.7){GC}
& & R \ar[d] \ar[ll]_{\rho} \ar[lld] \\ 
L_1 && \gr{R}{L_1}  \ar[ll] 
\\ 
}$$
\end{definition}

\begin{proposition}
\label{proposition:garbage-rgr}
The categorical rewriting system $\rs_{\rgr}$ is functorial.
\end{proposition}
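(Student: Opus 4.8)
The plan is to verify the two clauses of Definition~\ref{definition:crs-fun} directly for $\rs_{\rgr}$. The identity clause is immediate: for a rule $\rho:R\subseteq L$, applying $\funS_{\rgr,\rho}$ to the inclusion $\id_L:L\subseteq L$ produces the commutative square with vertices $L$, $R$, $L$ and $\gr{R}{L}$; but $R\subseteq L$ already, so by the very definition of $\gr{R}{L}$ we have $\gr{R}{L}=R$, and the resulting square is the identity morphism $\id_\rho$ in $\catgraphi^{\to\catgraphi}$. (Since $\catgraphi$ is a preorder, every square commutes, so there is nothing to check about commutativity, and the match produced is $\id_R:R\subseteq R$ as required.)

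For the composition clause, I would take $\rho:R\subseteq L$, a match $f_1:L\subseteq L_1$, and a match $f_2:L_1\subseteq L_2$. Since $\funS_{\rgr,\rho}$ is total, $f_1\in\dom(\funS_{\rgr,\rho})$ automatically, and the target $\rho_1$ of $\funS_{\rgr,\rho}(f_1)$ is the inclusion $\gr{R}{L_1}\subseteq L_1$; again by totality $f_2\in\dom(\funS_{\rgr,\rho_1})$, so the domain hypotheses are vacuous here and I need only prove the equation $\funS_{\rgr,\rho_1}(f_2)\circ\funS_{\rgr,\rho}(f_1)=\funS_{\rgr,\rho}(f_2\circ f_1)$. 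Unwinding the definitions, the left-hand side is the composite square whose bottom-right vertex, reached in two steps, is $\gr{\gr{R}{L_1}}{L_2}$, while the right-hand side is the single square with bottom-right vertex $\gr{R}{L_2}$. Because $\catgraphi$ is a preorder, two parallel morphisms of squares in $\catgraphi^{\to\catgraphi}$ coincide as soon as their four vertices agree, so the whole statement reduces to the single identity of subgraphs
$$ \gr{\,\gr{R}{L_1}\,}{L_2} \;=\; \gr{R}{L_2}\;. $$

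The heart of the argument is therefore this idempotency/transitivity property of reachability, which I would prove by a double inclusion on nodes (the generated-subgraph operator is then forced). For $\subseteq$: every node reachable in $L_2$ from some node of $\gr{R}{L_1}$ is reachable from a node that is itself reachable in $L_1$ from $R$; chaining the two recursive paths and noting $L_1\subseteq L_2$ (so successor relations are preserved) gives a reachability path in $L_2$ from $R$. For $\supseteq$: if a node is reachable in $L_2$ from $R$, then since $R\subseteq\gr{R}{L_1}$ it is in particular reachable in $L_2$ from $\gr{R}{L_1}$. I expect the only mild subtlety to be bookkeeping the fact that reachability in the smaller graph $L_1$ uses only $L_1$-edges whereas reachability in $L_2$ uses $L_2$-edges, but since $L_1\subseteq L_2$ this is monotone and causes no difficulty; contrast this with $\rs_{\lgr}$, where the analogous identity $\gr{L_1}{L_2}=\gr{L}{L_2}$ fails precisely because the root set changes from $L$ to the strictly larger $L_1$. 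Once the node identity is established, the generated subgraphs agree, the four vertices of the two squares match, and functoriality follows. \qed
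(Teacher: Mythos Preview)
Your composition argument is correct and matches the paper's approach: the paper's entire proof is the one-line claim that $\gr{R_1}{L_2}=\gr{R}{L_2}$ where $R_1=\gr{R}{L_1}$, which is exactly the identity you isolate and then verify by a double inclusion on reachable nodes. The bookkeeping you sketch (monotonicity in the root set for $\supseteq$, concatenation of reachability paths using $L_1\subseteq L_2$ for $\subseteq$) is the right way to flesh out the paper's ``easy to check''.

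The identity clause, however, contains a genuine error. You assert that ``$R\subseteq L$ already, so by the very definition of $\gr{R}{L}$ we have $\gr{R}{L}=R$'', but this is false: $\gr{R}{L}$ is the subgraph of $L$ generated by all nodes reachable \emph{in $L$} from the nodes of $R$, and this in general strictly contains $R$. For instance, take $L$ to be the graph with a single edge $a\to b$ and $R$ the subgraph consisting of the node $a$ alone; then $b$ is a successor of $a$ in $L$, so $\gr{R}{L}=L\ne R$. For this rule $\rho$, $\funS_{\rgr,\rho}(\id_L)$ produces the derived rule $\rho_1:\gr{R}{L}\subseteq L$, which is $\id_L$ rather than $\rho$, and since $\catgraphi$ is a poset there is no ``up to isomorphism'' slack to absorb the discrepancy. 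The paper's own proof does not address the identity clause at all and treats only the composition identity; your attempt to fill this in has surfaced a real subtlety rather than dispatched it.
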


\begin{proof}
It is easy to check that $\gr{R_1}{L_2}$, where $R_1=\gr{R}{L_1}$, 
is the same as $\gr{R}{L_2}$.
\end{proof}

$$
\xymatrix@C=2pc@R=1.5pc{
L \ar[d] \ar@{}[rrd]|(.3){=}|(.7){GC}
& & R \ar[d] \ar[ll] \ar[lld] \\ 
L_1 \ar[d] \ar@{}[rrd]|(.3){=}|(.7){GC}
& & R_1=\gr{R}{L_1} \ar[d] \ar[ll] \ar[lld] \\ 
L_2 && R_2=\gr{R_1}{L_2}  \ar[ll] 
\\  } 
\quad \xymatrix@C=2pc@R=2pc{ \\ = \\ } \quad 
\xymatrix@C=2pc@R=2pc{
L \ar[dd] \ar@{}[rrdd]|(.3){=}|(.7){GC}
& & R \ar[dd] \ar[ll] \ar[lldd] \\ 
 \\ 
L_2 && R_2=\gr{R}{L_2}  \ar[ll] 
\\  } 
$$

\begin{example} 
\label{example:garbage-rgr}
Let us apply $\rs_{\rgr}$ to $R=L\subseteq L_1 \subseteq L_2$ 
and to $R=L\subseteq L_2$, as in example~\ref{example:garbage}.
We get $\gr{R_1}{L_2} = \gr{R}{L_2}$.
  $$
  \begin{array}{r|c|c|c|l}
  \cline{2-2} \cline{4-4}  
  L & a &  
  \xymatrix@R=0pc@C=1pc{ & \ar[l] } & 
  a & R \\
  \cline{2-2} \cline{4-4} 
  \multicolumn{1}{c}{} & 
  \multicolumn{1}{c}{\xymatrix@R=.8pc{ \ar[d] \\ \\ }} & 
  \multicolumn{1}{c}{} & 
  \multicolumn{1}{c}{\xymatrix@R=.8pc{ \ar[d] \\ \\ }}  & 
  \multicolumn{1}{c}{} \\ 
  \cline{2-2} \cline{4-4} 
  L_1 & \xymatrix@R=.7pc@C=1pc{a \ar[d] & b \\ c & \\ } & 
  \xymatrix@R=.5pc@C=1pc{ \\ & \ar[l] \\ } & 
  \xymatrix@R=.7pc@C=1pc{a \ar[d]  \\ c \\ }  &
  \xymatrix@R=0pc@C=0pc{R_1\!\!=\;\;\;\; \\ \gr{R}{L_1} \\ } \\  
  \cline{2-2} \cline{4-4}
  \multicolumn{1}{c}{} & 
  \multicolumn{1}{c}{\xymatrix@R=.8pc{ \ar[d] \\ \\ }} & 
  \multicolumn{1}{c}{} & 
  \multicolumn{1}{c}{\xymatrix@R=.8pc{ \ar[d] \\ \\ }}  & 
  \multicolumn{1}{c}{} \\ 
  \cline{2-2} \cline{4-4} 
  L_2 & \xymatrix@R=.7pc@C=1pc{a \ar[d] \ar[dr] && b \ar[d] \\ c & d & e \\} & 
  \xymatrix@R=.5pc@C=1pc{ \\ & \ar[l] \\ } & 
  \xymatrix@R=.7pc@C=1pc{a \ar[d] \ar[dr] &  \\ c & d \\ }  &
  \gr{R_1}{L_2} \\  
  \cline{2-2} \cline{4-4}
  \end{array} 
  \xymatrix@C=2pc@R=0pc{ \\ = \\ } 
  \begin{array}{r|c|c|c|l}
  \cline{2-2} \cline{4-4}  
  L & a &  
  \xymatrix@R=0pc@C=.6pc{ & \ar[l] } & 
  a & R \\
  \cline{2-2} \cline{4-4} 
  \multicolumn{1}{c}{} & 
  \multicolumn{1}{c}{\xymatrix@R=2.8pc{ \ar[d] \\ \\ }} & 
  \multicolumn{1}{c}{} & 
  \multicolumn{1}{c}{\xymatrix@R=2.8pc{ \ar[d] \\ \\ }}  & 
  \multicolumn{1}{c}{} \\ 
  \cline{2-2} \cline{4-4} 
  L_2 & \xymatrix@R=.7pc@C=.6pc{a \ar[d] \ar[dr] && b \ar[d] \\ c & d & e \\} & 
  \xymatrix@R=.5pc@C=.6pc{ \\ & \ar[l] \\ } & 
  \xymatrix@R=.7pc@C=.6pc{a \ar[d] \ar[dr] & \\ c & d  \\ }  &
  \gr{R}{L_2} \\  
  \cline{2-2} \cline{4-4}
  \end{array} 
  $$
\end{example}

\section{Conclusion}

We have introduced a new notion of abstract rewriting system based on categories. 
These systems are designed for dealing with 
abstract rewriting frameworks where rewrite steps 
are defined by means of matches. We have defined the properties of (horizontal) 
composition as well as functoriality of rewriting in our abstract setting 
and we have illustrated these properties throughout several 
algebraic graph rewriting systems. 
We plan to extend and deepen our abstract framework 
by investigating other instances such as \cite{Heckel97,Lowe10} 
and by allowing the rewriting processes $\funS_{\rho}$
to be relations instead of partial functions. 

\subsection*{Acknowledgements}
We would like to thank Andrea Corradini
and Barbara K{\"o}nig for enlighting discussions about the Dyckhoff-Tholen 
condition. We also thank anonymous referees for insightful comments.


\end{document}